\documentclass[12pt, notitlepage]{article}

\usepackage{a4}

\usepackage{enumitem}

\usepackage{amssymb, graphicx}
\usepackage{subfigure}
\usepackage{amsmath}
\usepackage{undertilde}
\usepackage{verbatim}
\usepackage{bbm}
\usepackage{ dsfont }
\usepackage{geometry}
\geometry{a4paper, top=20mm, left=30mm, right=25mm, bottom=25mm,
headsep=10mm, footskip=12mm}
\usepackage{amsthm}
\newtheorem{thm}{Theorem}[section]
%[section]
\newtheorem{lem}[thm]{Lemma}%[section]
\newtheorem{cor}[thm]{Corollary}%[section]
\newtheorem{exa}[thm]{Example}%[section]
%[section]
\newtheorem{rem}[thm]{Remark}%[section]

\usepackage{url}

\usepackage{natbib}

\usepackage{sectsty}
\allsectionsfont{\sffamily}
\usepackage{caption}
\captionsetup{margin=10pt,font=small,labelfont=bf,skip=10pt, maxmargin=0.1\linewidth}

\newcommand{\N}{\mathbb{N}}
\newcommand{\Prob}{\mathbb{P}}
\newcommand{\R}{\mathbb{R}}

\newcommand{\Mb}{\mathbf{M}}

\newcommand{\Ub}{\mathbf{U}}
\newcommand{\Vb}{\mathbf{V}}

\newcommand{\Wb}{\mathbf{W}}

\newcommand{\Zb}{\mathbf{Z}}

\newcommand{\be}{\begin{equation}}
\newcommand{\ee}{\end{equation}}

\begin{document}
{
\renewcommand*{\thefootnote}{\fnsymbol{footnote}}
\title{\textbf{\sffamily Block-Maxima of Vines
}}
\date{\small January 2015}
\author{Matthias Killiches\footnote{Zentrum Mathematik, Technische Universit\"at M\"unchen, Boltzmannstra\ss e 3, 85748 Garching, Germany (email: \texttt{matthias.killiches@tum.de}). Corresponding author.} { and Claudia Czado\footnote{Zentrum Mathematik, Technische Universit\"at M\"unchen, Boltzmannstra\ss e 3, 85748 Garching, Germany (email: \texttt{cczado@ma.tum.de}).}}}

\maketitle

\begin{abstract}
We examine the dependence structure of finite block-maxima of multivariate distributions. We provide a closed form expression for the copula density of the vector of the block-maxima. Further, we show how partial derivatives of three-dimensional vine copulas can be obtained by only one-dimensional integration. Combining these results allows the numerical treatment of the block-maxima of any three-dimensional vine copula for finite block-sizes. We look at certain vine copula specifications and examine how the density of the block-maxima behaves for different block-sizes. Additionally, a real data example from hydrology is considered. In extreme-value theory for multivariate normal distributions, a certain scaling of each variable and the correlation matrix is necessary to obtain a non-trivial limiting distribution when the block-size goes to infinity. This scaling is applied to different three-dimensional vine copula specifications.\\

\noindent \textit{Keywords: Multivariate copula, vine copulas, finite block-maxima, scaled block-maxima, extreme-value scaling.}
\end{abstract}
%\vspace{10mm}
}
\section{Copula Density of the Distribution of Block-Maxima}
Basically, block-maxima have been used in extreme-value theory as one approach to derive the family of General Extreme-Value (GEV) distributions (\cite{mcneil2010quantitative}). In the recent past the block-maxima method has been studied more thoroughly and compared to the peaks-over-threshold (POT) method in \cite{ferreira2014block} and \cite{jaruvskova2006peaks}. \cite{dombry2013maximum} justifies the usage of a maximum-likelihood estimator for the extreme-value index within the block-maxima framework. The numerical convergence of the block-maxima approach to the GEV distribution is examined in \cite{faranda2011numerical}. Moreover, the block-maxima method has found its way into many application areas: \cite{marty2012long} investigate long-term changes in annual maximum snow depth and snowfall in Switzerland. Temperature, precipitation, wind extremes over Europe are analyzed in \cite{nikulin2011evaluation}. A spatial application can be found in \cite{naveau2009modelling}. \cite{rocco2014extreme} provides an overview over the concepts of extreme-value theory being used in finance. While many of the articles use univariate concepts, \cite{bucher2013extreme} treats how to estimate extreme-value copulas based on block-maxima of a multivariate stationary time series. In contrast to the existing literature (known to the authors), in the following we will consider finite block-maxima of multivariate random variables focusing on the dependence structure.\\

Let $\Ub=(U_1,\ldots,U_d)'$ be a random vector with $\mathcal{U}[0,1]$-distributed margins, copula $C$ and copula density $c$. We consider $n$ i.i.d. copies $\Ub_i=(U_{i,1},\ldots,U_{i,d})'$ of $\Ub$, $i=1, \ldots,n$. We apply the inverse probability integral transform to each component of $\Ub_i$ to obtain marginally normalized data (called z-scale):
\[
\Zb_i=(Z_{i,1},\ldots,Z_{i,d})' \text{ with } Z_{i,j}:=\Phi^{-1}(U_{i,j})\sim \mathcal{N}(0,1),
\]
for $i=1,\ldots,n$, $j=1,\ldots,d$, where $\Phi^{-1}$ is the quantile function of the standard normal distribution $\mathcal{N}(0,1)$. We consider this normalized scale since later we want to compare this to the limiting approach used to derive the multivariate H\"{u}sler-Reiss extreme-value copula. We are interested in the distribution $F^{(n)}$ of the vector of componentwise block-maxima 
\[
\Mb^{(n)}=(M^{(n)}_{1},\ldots,M^{(n)}_{d})' \text{ with } M^{(n)}_{j}:=\max_{i=1,\ldots,n}Z_{i,j}
\] 
for $j=1,\ldots,d$. According to \cite{Sklar} the dependency structure is determined by the corresponding copula $C_{\Mb^{(n)}}$. Since $Z_{i,j}$, $i=1,\ldots,n$, are i.i.d., we know that the marginal distribution functions of $M^{(n)}_{j}$ are given by
\be\label{eq:Fnj}
F^{(n)}_{j}(m_j)=\Prob\left(Z_{1,j}\leq m_j,\ldots,Z_{n,j}\leq m_j\right)=\Phi(m_j)^n
\ee
and hence the corresponding densities are
\be\label{eq:fnj}
f^{(n)}_{j}(m_j)=n\Phi(m_j)^{n-1}\varphi(m_j)
\ee
for $m_j\in \R$, $j=1,\ldots,d$. Here $\Phi$ and $\varphi$ denote the distribution function and the density of the standard normal distribution, respectively. Thus, the copula $C_{\Mb^{(n)}}$ is the distribution function of
\[
\Vb=(V_1,\ldots,V_d)' \text{ with } V_j:=\Phi\left(M_j^{(n)}\right)^n\sim \mathcal{U}[0,1].
\]
For $n\in\N$ the copula of the componentwise maxima $C_{\Mb^{(n)}}$ can be expressed in terms of the underlying copula $C$ as follows
\be\label{eq:maxcop}
C_{\Mb^{(n)}}(u_1,\ldots,u_d)=C\left(u_1^{1/n},\ldots,u_d^{1/n}\right)^n,
\ee
where $u_j\in [0,1]$, $j=1,\ldots,d$. Since $C$ is assumed to have a density $c$, Equation \ref{eq:maxcop} yields that $C_{\Mb^{(n)}}$ also has a density, denoted by $c_{\Mb^{(n)}}$. Using Sklar's Theorem, Equations \ref{eq:Fnj} and \ref{eq:maxcop} imply that the joint distribution function of $\Mb^{(n)}$ is given by
\[
F^{(n)}(m_1,\ldots,m_d)=C_{\Mb^{(n)}}\left(F^{(n)}_{1}(m_1),\ldots,F^{(n)}_{d}(m_d)\right)=C\left(\Phi(m_1),\ldots,\Phi(m_d)\right)^n.
\]
\begin{thm}\label{thm:copuladensity}
The density of the copula of the vector of block-maxima satisfies for $u_j\in[0,1]$, $j=1\ldots,d$:
\begin{multline}\label{eq:copuladensity}
c_{\Mb^{(n)}}(u_1,\ldots,u_d) =
\frac{1}{n^d}\left(\prod_{j=1}^d u_j\right)^{\frac{1}{n}-1}\sum_{j=1}^{d \wedge n}\Bigg\{\frac{n!}{(n-j)!} C\left(u_1^{1/n},\ldots,u_d^{1/n}\right)^{n-j}\Bigg.\\
\Bigg. \cdot\sum_{\mathcal{P}\in\mathcal{S}_{d,j}} \prod_{M\in\mathcal{P}} \partial_M C\left(u_1^{1/n},\ldots,u_d^{1/n}\right) \Bigg\},
\end{multline}
where
$d \wedge n:=\min\left\{d,n\right\}$, $\mathcal{S}_{d,j}:=\left\{\mathcal{P} | \mathcal{P}\textrm{  partition of }\left\{1,\ldots,d\right\} \textrm{ with } \left|\mathcal{P}\right|=j \right\}$ and 
\[
\partial_M C\left(u_1^{1/n},\ldots,u_d^{1/n}\right):=\frac{\partial^p C(v_1,\ldots,v_d)}{\partial v_{m_1} \cdots \partial v_{m_p}}\bigg|_{v_1=u_1^{1/n},\ldots, v_d=u_d^{1/n}}
\]
for $M=\left\{m_1,\ldots,m_p\right\}\subseteq \left\{1,\ldots,d\right\}$.
\end{thm}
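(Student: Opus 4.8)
The plan is to obtain the density as the full mixed partial derivative $c_{\Mb^{(n)}}=\partial_{u_1}\cdots\partial_{u_d}C_{\Mb^{(n)}}$ of the block-maxima copula in \eqref{eq:maxcop}, and to organize the resulting terms by the set partitions of $\{1,\ldots,d\}$.

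First I would substitute $v_j:=u_j^{1/n}$, so that $C_{\Mb^{(n)}}(u_1,\ldots,u_d)=C(v_1,\ldots,v_d)^n$ with $\partial v_j/\partial u_j=\tfrac1n u_j^{1/n-1}$. Since each $u_j$ enters only through its own $v_j$, the chain rule makes the full mixed $u$-derivative factor:
\[
\partial_{u_1}\cdots\partial_{u_d}\,C(v_1,\ldots,v_d)^n
=\frac{1}{n^d}\Big(\prod_{j=1}^d u_j\Big)^{\frac1n-1}\,\partial_{v_1}\cdots\partial_{v_d}\,C(v_1,\ldots,v_d)^n .
\]
This already produces the prefactor in \eqref{eq:copuladensity}, so it remains only to evaluate the fully mixed $v$-derivative of $C^n$.

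For this I would establish, for any sufficiently smooth $f$ and by induction on $d$, the identity
\[
\partial_{v_1}\cdots\partial_{v_d}\,f(C)
=\sum_{j=1}^{d}\,\sum_{\mathcal{P}\in\mathcal{S}_{d,j}} f^{(j)}(C)\prod_{M\in\mathcal{P}}\partial_M C ,
\]
which is the fully mixed case of Fa\`a di Bruno's formula. The base case $d=1$ reads $\partial_{v_1}f(C)=f'(C)\,\partial_{v_1}C$. For the step I would apply $\partial_{v_{d+1}}$ to a generic summand $f^{(j)}(C)\prod_{M\in\mathcal{P}}\partial_M C$ and use the product rule: the new derivative either hits $f^{(j)}(C)$, which appends the singleton $\{d+1\}$ and raises the order of $f$ to $j+1$, or it hits one factor $\partial_M C$, turning it into $\partial_{M\cup\{d+1\}}C$ while leaving the block count at $j$. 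These two moves reproduce the Stirling recursion $|\mathcal{S}_{d+1,j}|=|\mathcal{S}_{d,j-1}|+j\,|\mathcal{S}_{d,j}|$ and thereby generate every partition of $\{1,\ldots,d+1\}$.

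Finally I would specialize $f(x)=x^n$, for which $f^{(j)}(x)=\frac{n!}{(n-j)!}\,x^{n-j}$ when $j\le n$ and $f^{(j)}\equiv0$ when $j>n$. The vanishing of these higher derivatives deletes every summand with $j>n$, so the outer sum truncates at $d\wedge n$; inserting $C^{n-j}$ and multiplying by the chain-rule prefactor yields exactly \eqref{eq:copuladensity}. The main obstacle is the combinatorial bookkeeping in the inductive step, namely verifying that singleton-appending and block-augmentation together produce each partition of $\{1,\ldots,d+1\}$ once and only once; this holds because any such partition is determined uniquely by its trace on $\{1,\ldots,d\}$ together with the block containing $d+1$, which is either a fresh singleton or one of the existing blocks.
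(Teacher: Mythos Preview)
Your proposal is correct and follows the same inductive-on-partitions strategy as the paper, but organized more cleanly. The paper's proof works directly with the $u$-variables throughout (proving an auxiliary lemma for $\partial_{u_1}\cdots\partial_{u_k}[C(u_1^{1/n},\ldots,u_d^{1/n})^n]$ for each $k\le d$), so the chain-rule factors $\tfrac1n u_j^{1/n-1}$ are carried through the induction; and it handles the truncation at $d\wedge n$ by splitting into two cases $n>k$ and $n\le k$. You instead (i) pull the substitution $v_j=u_j^{1/n}$ out in front, isolating the prefactor once and for all, (ii) prove the multivariate Fa\`a di Bruno identity for a general outer function $f$ and only then specialize to $f(x)=x^n$, and (iii) obtain the truncation automatically from $f^{(j)}\equiv 0$ for $j>n$, avoiding any case distinction. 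Both arguments rest on the same combinatorial recursion (singleton-append vs.\ block-augment), so the content is equivalent; your packaging is more transparent and slightly more general, while the paper's version is self-contained and does not appeal to Fa\`a di Bruno by name.
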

The proof of Theorem \ref{thm:copuladensity} as well as all other proofs can be found in the appendix at the end of this chapter (pages \pageref{sec:appendix} ff.).

For the joint density $f^{(n)}$ of the block-maxima with marginally normalized data (on the z-scale) we also obtain an explicit expression.
\begin{cor}
\label{cor:maxzden}
For $m_j\in\R$, $j=1,\ldots,d$, we have
\begin{multline}
f^{(n)}(m_1,\ldots,m_d)=\left(\prod_{j=1}^d \varphi(m_j) \right)\cdot\sum_{j=1}^{d \wedge n}\Bigg\{ \frac{n!}{(n-j)!}\cdot C\left(\Phi(m_1),\ldots,\Phi(m_d)\right)^{n-j}\big. \\
\big. \cdot \sum_{\mathcal{P}\in\mathcal{S}_{d,j}} \prod_{M\in\mathcal{P}} \partial_M C\left(\Phi(m_1),\ldots,\Phi(m_d)\right) \Bigg\}.
\end{multline}
\end{cor}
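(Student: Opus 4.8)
The plan is to derive $f^{(n)}$ directly from the copula density $c_{\Mb^{(n)}}$ of Theorem~\ref{thm:copuladensity} via the density form of Sklar's theorem (\cite{Sklar}). Differentiating the identity $F^{(n)}(m_1,\ldots,m_d)=C_{\Mb^{(n)}}\bigl(F^{(n)}_1(m_1),\ldots,F^{(n)}_d(m_d)\bigr)$ with respect to $m_1,\ldots,m_d$ and applying the chain rule gives
\[
f^{(n)}(m_1,\ldots,m_d)=c_{\Mb^{(n)}}\bigl(F^{(n)}_1(m_1),\ldots,F^{(n)}_d(m_d)\bigr)\cdot\prod_{j=1}^d f^{(n)}_j(m_j),
\]
so the whole statement reduces to inserting the marginal quantities from \eqref{eq:Fnj} and \eqref{eq:fnj} into the formula of Theorem~\ref{thm:copuladensity} and simplifying.

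First I would make the crucial substitution $u_j:=F^{(n)}_j(m_j)=\Phi(m_j)^n$, which yields $u_j^{1/n}=\Phi(m_j)$. Under this substitution every argument $u_k^{1/n}$ appearing inside $C$, inside the power $C^{n-j}$, and inside each partial derivative $\partial_M C$ in \eqref{eq:copuladensity} becomes exactly $\Phi(m_k)$. Consequently the entire double sum over $j$ and over partitions $\mathcal{P}\in\mathcal{S}_{d,j}$ transforms into the corresponding expression evaluated at $(\Phi(m_1),\ldots,\Phi(m_d))$, which is precisely the bracketed factor on the right-hand side of the corollary.

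Next I would track the two prefactors that must cancel. The product prefactor in \eqref{eq:copuladensity} becomes $\bigl(\prod_{j=1}^d u_j\bigr)^{1/n-1}=\prod_{j=1}^d \Phi(m_j)^{\,n(1/n-1)}=\prod_{j=1}^d \Phi(m_j)^{\,1-n}$, while the marginal densities contribute $\prod_{j=1}^d f^{(n)}_j(m_j)=n^d\prod_{j=1}^d \Phi(m_j)^{\,n-1}\varphi(m_j)$ by \eqref{eq:fnj}. Multiplying these together, the factor $n^{-d}$ carried by $c_{\Mb^{(n)}}$ cancels the $n^d$ from the marginals, and the opposite powers $\Phi(m_j)^{\,1-n}$ and $\Phi(m_j)^{\,n-1}$ cancel, leaving precisely the overall prefactor $\prod_{j=1}^d\varphi(m_j)$.

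I do not expect a substantial obstacle here: the argument is a direct substitution followed by bookkeeping. The only point demanding care is the exponent arithmetic $n(1/n-1)=1-n$, which guarantees that the powers of $\Phi(m_j)$ cancel exactly and that no residual $\Phi(m_j)$-factor survives. Once this is checked, combining the reduced prefactor $\prod_{j=1}^d\varphi(m_j)$ with the substituted double sum yields the claimed closed-form expression for $f^{(n)}$.
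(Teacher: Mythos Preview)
Your proposal is correct and follows precisely the intended route: the corollary is an immediate consequence of Theorem~\ref{thm:copuladensity} via the density form of Sklar's theorem, and your substitution $u_j=\Phi(m_j)^n$ together with the cancellation of $n^{\pm d}$ and $\Phi(m_j)^{\pm(n-1)}$ is exactly the bookkeeping required. The paper does not spell out a separate proof for this corollary, so your argument is the natural completion.
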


\begin{exa}
Let $d=3$, $n\in\N$, i.e. $\Ub_i=(U_{i,1},U_{i,2},U_{i,3})'$, $\Zb_i=(Z_{i,1},Z_{i,2},Z_{i,3})'$ and $\Mb^{(n)}=(M^{(n)}_1,M^{(n)}_2,M^{(n)}_3)'$, $i=1,\ldots,n$. If $n\geq 3$ the copula density of the vector of the block-maxima is given by
\[
\begin{split}
c_{\Mb^{(n)}}(u_1,&u_2,u_3)=\frac{(u_1u_2u_3)^{\frac{1}{n}-1}}{n^3}\cdot \Big\{nC\left(u_1^{1/n},u_2^{1/n},u_3^{1/n}\right)^{n-1}c\left(u_1^{1/n},u_2^{1/n},u_3^{1/n}\right)\Big.\\
&\quad+n(n-1)C\left(u_1^{1/n},u_2^{1/n},u_3^{1/n}\right)^{n-2}\\
&\quad\cdot\Big[\partial_1C\left(u_1^{1/n},u_2^{1/n},u_3^{1/n}\right)\partial_{23}C\left(u_1^{1/n},u_2^{1/n},u_3^{1/n}\right)\Big.\\
&\quad+\partial_2C\left(u_1^{1/n},u_2^{1/n},u_3^{1/n}\right)\partial_{13}C\left(u_1^{1/n},u_2^{1/n},u_3^{1/n}\right)\\
&\Big.\quad+\partial_3C\left(u_1^{1/n},u_2^{1/n},u_3^{1/n}\right)\partial_{12}C\left(u_1^{1/n},u_2^{1/n},u_3^{1/n}\right)\Big]\\
&\quad+n(n-1)(n-2)C\left(u_1^{1/n},u_2^{1/n},u_3^{1/n}\right)^{n-3}\partial_1C\left(u_1^{1/n},u_2^{1/n},u_3^{1/n}\right)\\
&\quad\Big.\cdot\partial_2C\left(u_1^{1/n},u_2^{1/n},u_3^{1/n}\right)\partial_3C\left(u_1^{1/n},u_2^{1/n},u_3^{1/n}\right)
\Big\}.
\end{split}
\]
\end{exa}

\section{Vine Copulas}
While the catalog of bivariate copula families (see for example \cite{Joe}) is large this is not the case for multivariate copula families. They were initially 
dominated by Archimedean and elliptical copulas, however for complex dependency patterns such as asymmetric dependence in the tails these classes are insufficient. The class of vine copulas (\cite{Bedford}, \cite{vinebook}, \cite{aasczado}, \cite{kuro:joe:2010}) can accommodate such patterns. See \cite{stoeberczado2012} for a tutorial introduction and \cite{cza:2010} and \cite{jaworski2012} for recent reviews. Basically vine copulas are constructed using bivariate copulas called pair copulas as building blocks which are combined to form multivariate copulas using conditioning. The pair copulas represent the copula associated with bivariate conditional distributions. The conditioning variables are determined with the help of a sequence of linked trees called the vine structure. Further it is commonly assumed that the conditioning value does not influence the copula and its parameter.  See \cite{stoeber-vines} for a discussion of this simplifying condition. Further they show that multivariate Clayton copula is the only Archimedean copula which can be represented as a vine copula, while the multivariate t-copula is the only scale elliptical one. For the multivariate Gaussian copula and the t-copula the needed pair copulas are bivariate Gaussian or t-copulas, respectively. The corresponding parameters are given by (partial) correlation parameters. 

Vine copulas allow for product expressions of the
density. We only consider three-dimensional vine copulas which can be expressed as
\begin{equation}\label{eq:pcc}
	c(u_1,u_2,u_3) = c_{12}(u_1,u_2) c_{23}(u_2,u_3) c_{13;2}(C_{1|2}(u_1|u_2),C_{3|2}(u_3|u_2)).
\end{equation}
Here $c_{ij;k}$ denotes the bivariate copula density corresponding to bivariate distribution $(U_i,U_j)$ given $U_k=u_k$ and $C_{i|j}(u_i|u_j)$
denotes the conditional distribution function of $U_i$ given $U_j=u_j$, which 
can be expressed as
\[
C_{i|j}(u_i|u_j)= \frac{ \partial }{\partial u_j}  C_{ij}(u_i,u_j)
\] 
Further we write the bivariate copula densities in terms of their copula, i.e.
$
	c_{ij}(u_i,u_j) = \frac{\partial^2}{\partial u_i \partial u_j} C_{ij}(u_i,u_j).
$
For the  pair copulas $C_{12}, C_{23}, C_{13;2}$ arbitrary bivariate copulas can be utilized. Many bivariate families including rotations are implemented
in the R library \texttt{VineCopula} (see \cite{VC}), which allows for parameter estimation and model selection of vine copulas in arbitrary dimensions.

Now we will consider the three-dimensional case and derive expressions for the partial derivatives needed in Theorem \ref{thm:copuladensity} for the expression of the copula density for the block-maxima.

\begin{thm}\label{thm:copuladeriv}
For the vine copula density \eqref{eq:pcc} we have:
\begin{enumerate}
	\item $C(u_1,u_2,u_3) = \int_0^{u_2} C_{13;2}\left( C_{1|2}(u_1|v_2), C_{3|2}(u_3|v_2)\right) dv_2 $,
	\item
\begin{enumerate}
	\item $\partial_{1}C(u_1,u_2,u_3) = \int_0^{u_2} \partial_{1} C_{13|2}(C_{1|2}(u_1|v_2),C_{3|2}(u_3|v_2)) c_{12}(u_1,v_2) dv_2$,
	\item $\partial_{2} C(u_1,u_2,u_3) = C_{13;2}(C_{1|2}(u_1|u_2),C_{3|2}(u_3|u_2))$,
	\item $\partial_{3} C(u_1,u_2,u_3) = \int_0^{u_2} \partial_{3} C_{13|2}(C_{1|2}(u_1|v_2),C_{3|2}(u_3|v_2)) c_{23}(v_2,u_3) dv_2$,
\end{enumerate}
\item
\begin{enumerate}
	\item $\partial_{12} C(u_1,u_2,u_3) = \partial_{1} C_{13;2}(C_{1|2}(u_1|u_2),C_{3|2}(u_3|u_2)) c_{12}(u_1,u_2)$,
	\item $\partial_{13} C(u_1,u_2,u_3) = \int_0^{u_2} c_{13;2}(C_{1|2}(u_1|v_2),C_{3|2}(u_3|v_2)) c_{23}(v_2,u_3)c_{12}(u_1,v_2) dv_2$,
	\item $\partial_{23} C(u_1,u_2,u_3) = \partial_{3} C_{13;2}(C_{1|2}(u_1|u_2),C_{3|2}(u_3|u_2)) c_{23}(u_2,u_3)$,
\end{enumerate}
\item $c(u_1,u_2,u_3) = c_{12}(u_1,u_2) c_{23}(u_2,u_3) c_{13;2}(C_{1|2}(u_1|u_2),C_{3|2}(u_3|u_2))$.

\end{enumerate}
\end{thm}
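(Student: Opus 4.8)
The plan is to establish the integral representation in part~1 first, since every remaining claim then follows from it by differentiation. To prove part~1, I would write the copula as the integral of its density over the box $[0,u_1]\times[0,u_2]\times[0,u_3]$,
\[
C(u_1,u_2,u_3)=\int_0^{u_1}\!\!\int_0^{u_2}\!\!\int_0^{u_3} c_{12}(w_1,w_2)\,c_{23}(w_2,w_3)\,c_{13;2}\bigl(C_{1|2}(w_1|w_2),C_{3|2}(w_3|w_2)\bigr)\,dw_3\,dw_2\,dw_1,
\]
and carry out the $w_1$- and $w_3$-integration for fixed $w_2$ first. The key observation is that $c_{12}(w_1,w_2)=\partial_{w_1}C_{1|2}(w_1|w_2)$ and $c_{23}(w_2,w_3)=\partial_{w_3}C_{3|2}(w_3|w_2)$, so the substitution $a=C_{1|2}(w_1|w_2)$, $b=C_{3|2}(w_3|w_2)$ (monotone in $w_1$ resp.\ $w_3$, with diagonal Jacobian) absorbs exactly the two density factors. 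Using the boundary conditions $C_{1|2}(0|w_2)=C_{3|2}(0|w_2)=0$, the inner double integral becomes $\int_0^{C_{1|2}(u_1|w_2)}\!\int_0^{C_{3|2}(u_3|w_2)} c_{13;2}(a,b)\,db\,da$, which equals $C_{13;2}\bigl(C_{1|2}(u_1|w_2),C_{3|2}(u_3|w_2)\bigr)$ since $c_{13;2}$ integrates up to its own copula. Renaming $w_2$ as $v_2$ gives part~1.

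Parts~2 and~3 then reduce to differentiating this representation. For the $u_2$-derivatives I would use the fundamental theorem of calculus on the outer integral: part~2(b) is the FTC applied directly to part~1, and parts~3(a) and~3(c) follow by differentiating 2(b) in $u_1$ resp.\ $u_3$ via the chain rule, using $\partial_{u_1}C_{1|2}(u_1|u_2)=c_{12}(u_1,u_2)$ and $\partial_{u_3}C_{3|2}(u_3|u_2)=c_{23}(u_2,u_3)$. For the $u_1$- and $u_3$-derivatives I would differentiate part~1 under the integral sign: 2(a) and 2(c) come from differentiating in $u_1$ resp.\ $u_3$ with the same chain-rule identities, and 3(b) is obtained by differentiating 2(c) under the integral in $u_1$, additionally using $\partial_1\partial_3 C_{13;2}=c_{13;2}$. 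Part~4 is just the defining equation \eqref{eq:pcc}; it is also recovered by differentiating 3(b) in $u_2$ through the FTC, which provides a consistency check.

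The only genuinely substantive step is the change of variables in part~1; everything afterwards is bookkeeping with the chain rule and the fundamental theorem of calculus. The point requiring the most care is the justification of differentiation under the integral sign in parts~2(a), 2(c), and~3(b): this demands that the integrands and their $u_1$- and $u_3$-derivatives be dominated by an integrable function on $[0,u_2]$, which holds under the standing smoothness assumptions on the pair-copula densities. I expect this regularity verification, rather than the algebra, to be where the proof needs attention.
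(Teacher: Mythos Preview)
Your proposal is correct and rests on the same underlying computation as the paper's proof---the substitution $a=C_{1|2}(w_1|w_2)$, $b=C_{3|2}(w_3|w_2)$ whose Jacobian absorbs the pair-copula densities---but you organize the argument differently. The paper works in the opposite direction: it first obtains $\partial_{23}C$ and $\partial_3 C$ by integrating the vine density over one respectively two variables (doing the same chain-rule substitution one variable at a time), then rewrites $\partial_3 C$ so that the $u_3$-derivative can be pulled outside the $v_2$-integral, and only then recovers part~1 by integrating that expression in $u_3$; parts~2(b) and~3(b) are derived afterwards. Your route---establish part~1 once by a clean two-variable change of variables and obtain every partial derivative from it by the fundamental theorem of calculus and differentiation under the integral sign---is more economical and makes the logical dependencies transparent. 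The paper's ordering, on the other hand, avoids ever writing the full triple integral and hence sidesteps any explicit justification of interchanging the $v_2$-integral with the $u_1$- or $u_3$-derivative; your honest acknowledgment that this Leibniz-rule step needs a domination hypothesis is the one place your write-up should be more explicit than the paper's.
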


Theorem \ref{thm:copuladeriv} shows that the copula density corresponding to the three-dimensional vector of block-maxima based on an arbitrary vine copula is numerically tractable since only one-dimensional integration is needed. In particular this allows a numerical treatment for the block-size $n$ in a finite setting. Additionally we can use the vine decomposition for a three-dimensional Gaussian or t-copula instead of requiring three-dimensional integration to calculate the corresponding density of the block-maxima. Two examples illustrate this way of proceeding.

\begin{exa}
\label{ex:clayton3d}
As a first example we take a three-dimensional Clayton-vine, i.e. all three pair-copulas are bivariate Clayton copulas. As parameters we choose $\delta_{12}=6$, $\delta_{23}=7.09$ and $\delta_{13;2}=4.67$ corresponding to Kendall's $\tau$ values of $\tau_{12}=0.75$ and $\tau_{23}=0.78$, $\tau_{13;2}=0.70$. 
\begin{figure}[!htb]
	\centering
		\includegraphics[trim=0cm 7cm 0cm 0.2cm,clip,width=\textwidth]{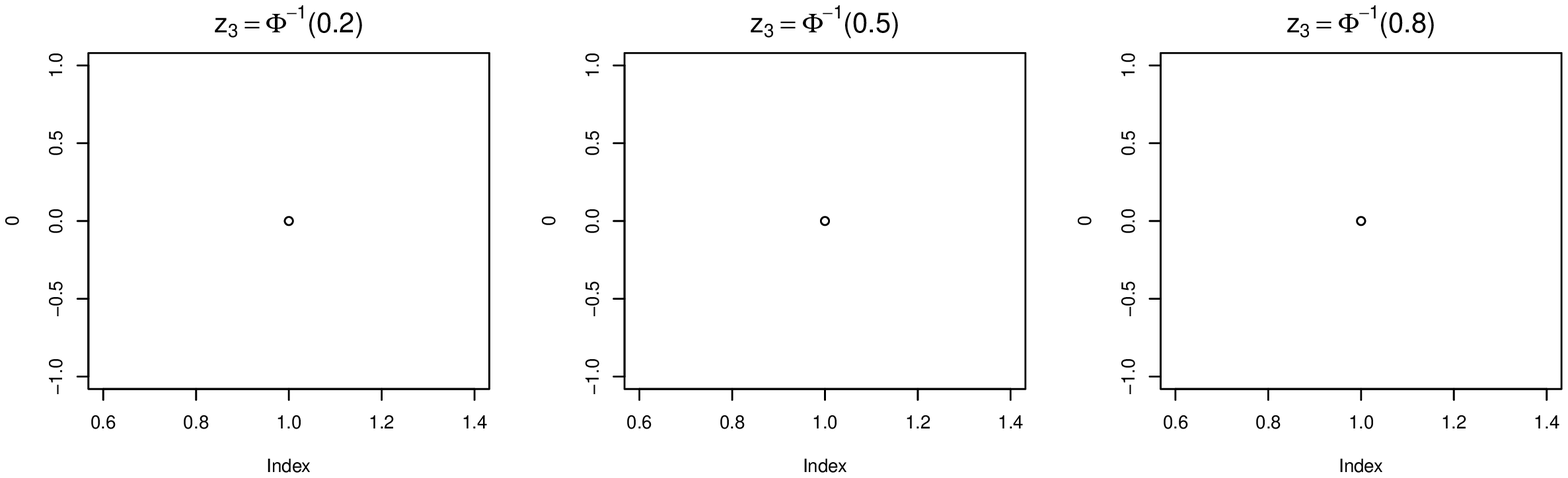}
		\includegraphics[width=\textwidth]{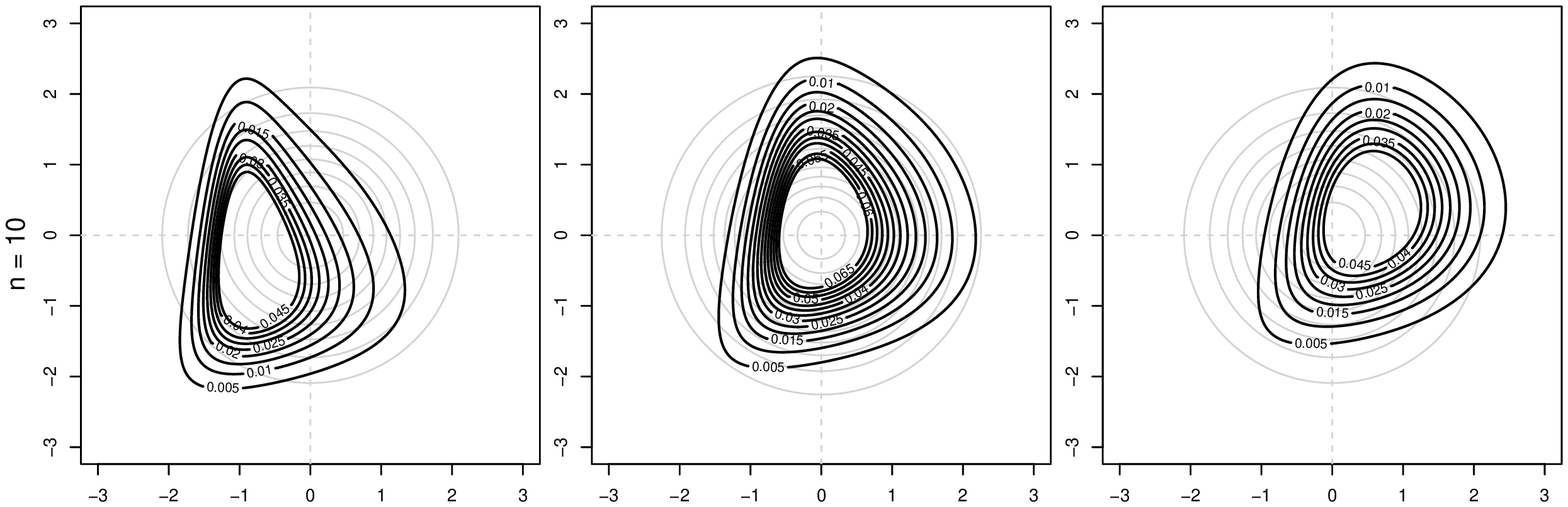}
		\includegraphics[width=\textwidth]{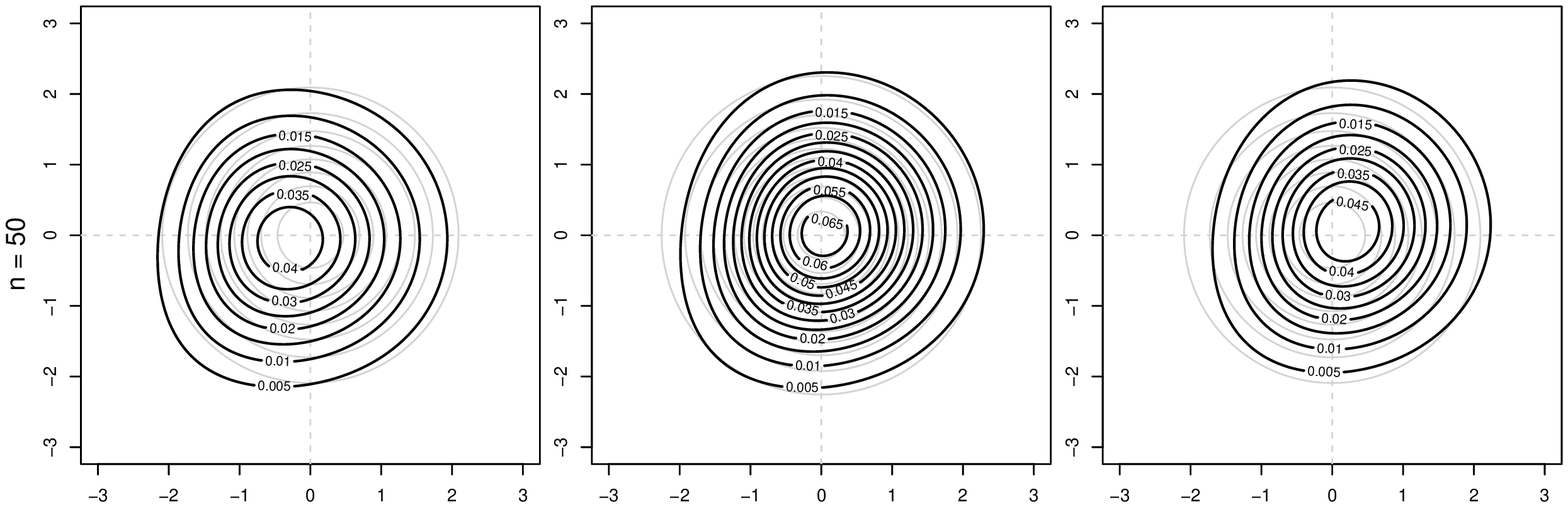}
		\includegraphics[width=\textwidth]{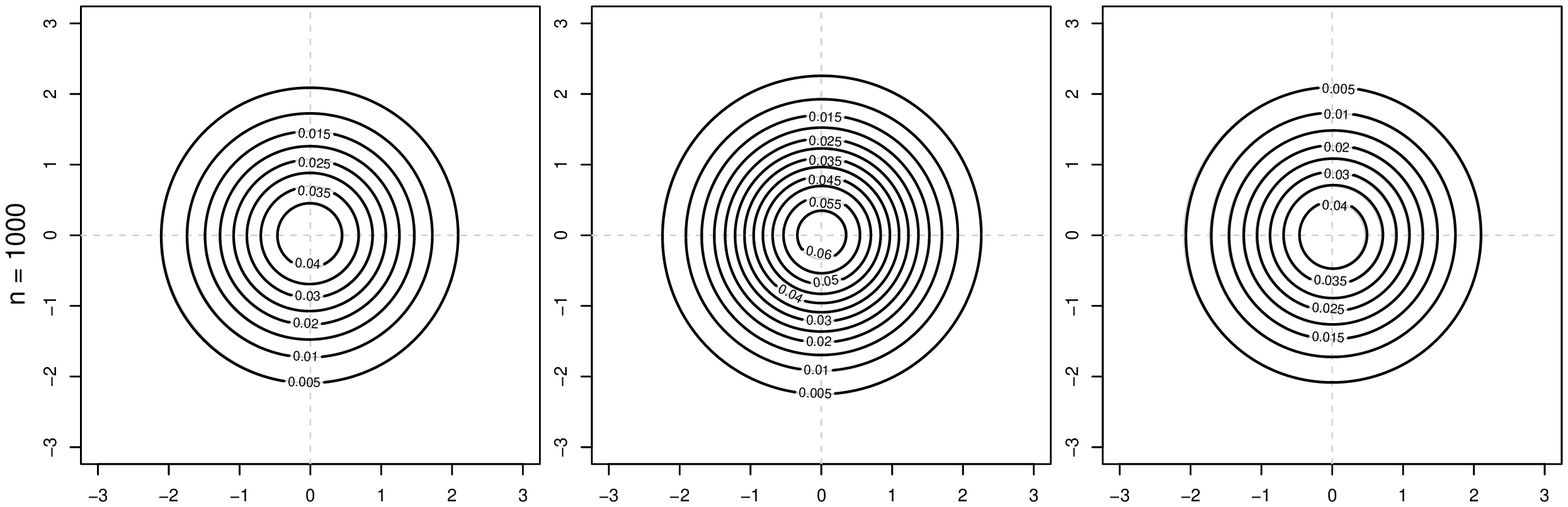}
	\caption{Two-dimensional slices of the copula density for block-maxima of a three-dimensional Clayton-vine with normalized margins ($\tau_{12}=0.75$, $\tau_{23}=0.78$, $\tau_{13;2}=0.7$).}
	\label{fig:ClaytonUnscaled}
\end{figure}
Figure \ref{fig:ClaytonUnscaled} shows the copula density of the block-maxima of this vine with normalized margins (i.e. on the z-level) for block-sizes $n=10,50,10^3$. Each row represents one block-size and contains three contour plots. Since it is difficult to plot three-dimensional objects in a simple way we decided not to show the isosurfaces but cut the three-dimensional object into three slices parallel to the z$_{1}$-z$_{2}$-plane. Each column presents the contourplot of one slice, where the z$_{3}$-value is fixed to $\Phi^{-1}(0.2)$, $\Phi^{-1}(0.5)$ or $\Phi^{-1}(0.8)$, respectively. Furthermore, we plotted the contours of the independence copula with normalized margins. One can see that already for $n=10^3$ the contours of the copula density of the block-maxima with normalized margins practically coincide with the ones of the independence copula.
\end{exa}

\begin{rem}
Even though it is not known whether all Clayton-vines lie in the domain of attraction of the independence copula, one can show that the Clayton-copula, which can be represented as a Clayton-vine with specific parameter restrictions (\cite{stoeber-vines}), lies in the domain of attraction of the independence copula. According to \cite{gudendorf2010extreme} an Archimedean copula with generator $\varphi$ lies in the domain of attraction of the Gumbel-copula with parameter $\theta:=-\lim_{s\downarrow 0}\frac{s\varphi'(1-s)}{\varphi(1-s)}\in[1,\infty)$ if the limit exists. For the Clayton-copula this limit is equal to 1. Therefore, the copula of the block-maxima of a Clayton-copula converges to the Gumbel-copula with $\theta=1$, which is the independence copula.
\end{rem}
\begin{figure}[!htb]
	\centering
		\includegraphics[trim=0cm 7cm 0cm 0.2cm,clip,width=\textwidth]{caption.eps}
		\includegraphics[width=\textwidth]{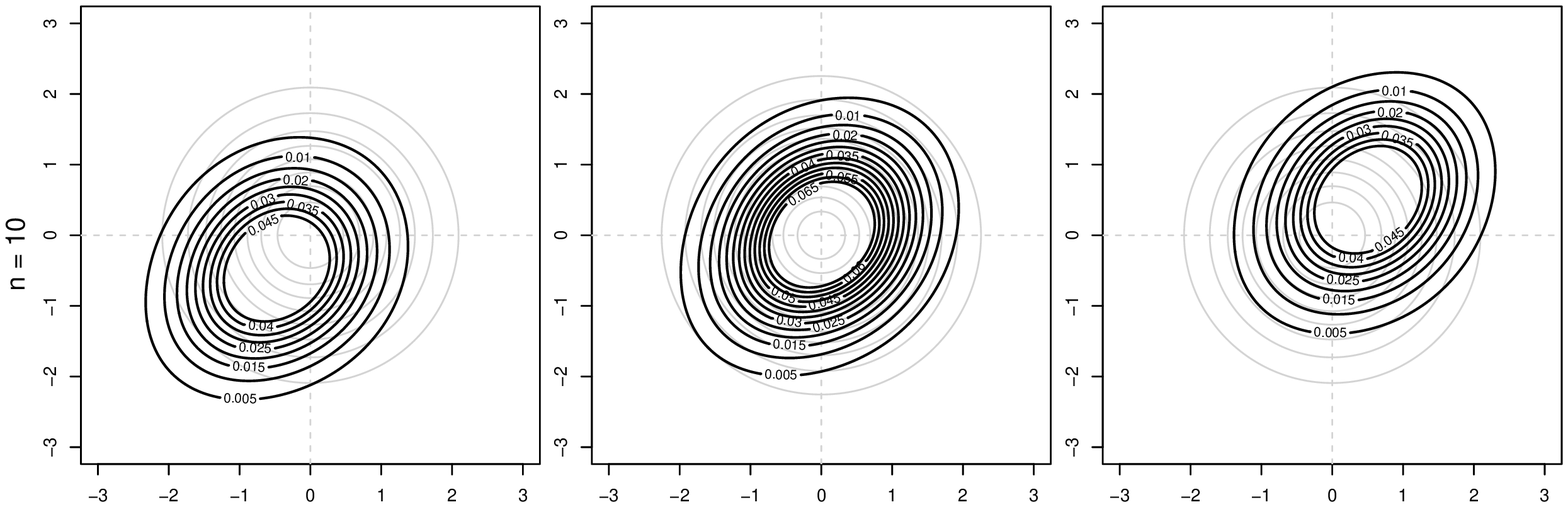}
		\includegraphics[width=\textwidth]{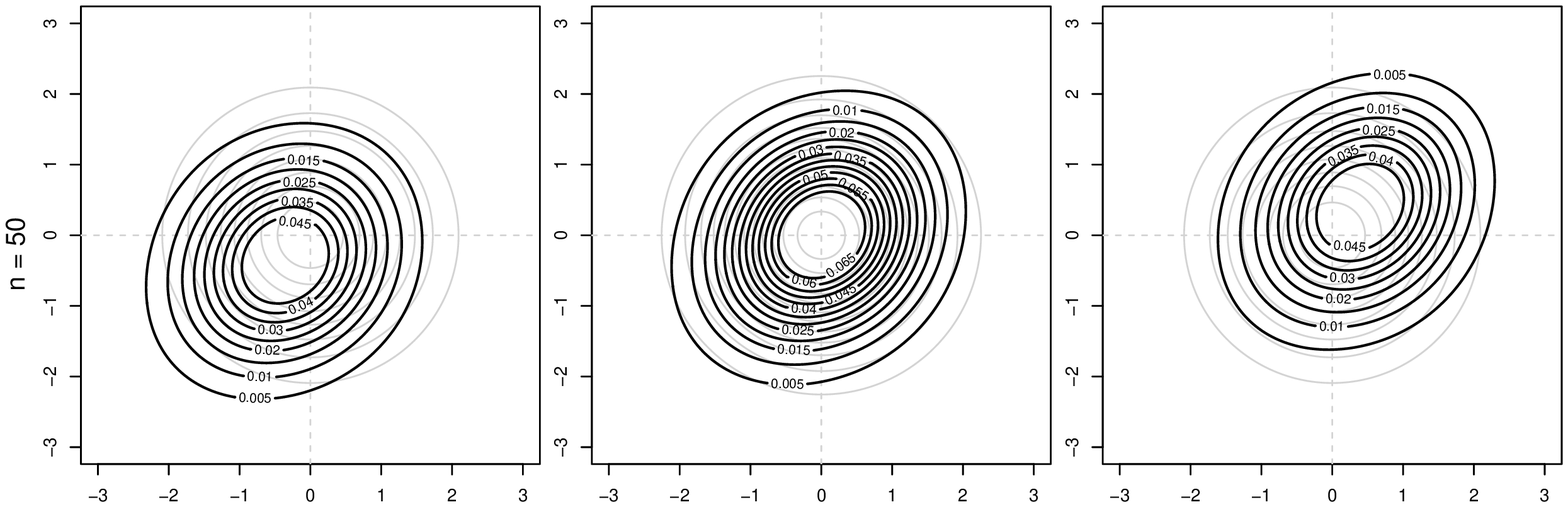}
		\includegraphics[width=\textwidth]{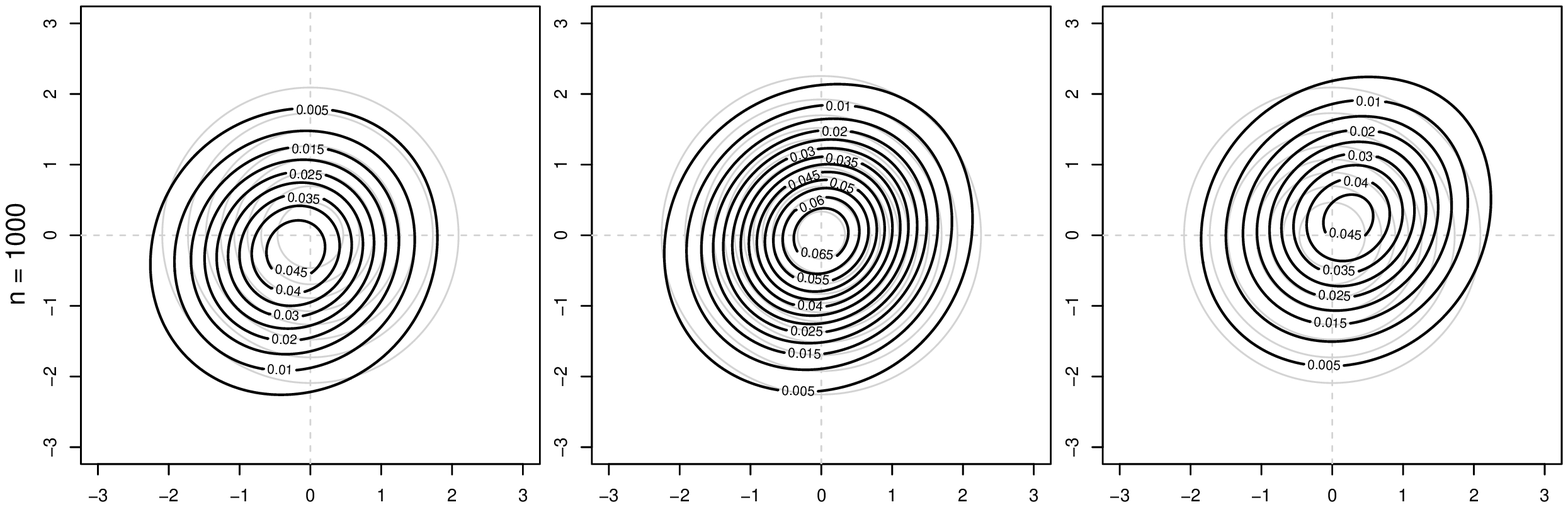}
	\caption{Two-dimensional slices of the copula density for block-maxima of a three-dimensional Gaussian vine with normalized margins ($\tau_{12}=0.5$, $\tau_{23}=0.57$, $\tau_{13;2}=0.35$).}
	\label{fig:GaussUnscaled}
\end{figure}

\begin{exa}
\label{ex:gauss3d}
The second example we present is a three-dimensional Gaussian vine, i.e. all three pair-copulas are bivariate Gaussian copulas. As parameters we choose $\rho_{12}=0.71$, $\rho_{23}=0.78$ and $\rho_{13;2}=0.52$ corresponding to Kendall's $\tau$ values of $\tau_{12}=0.50$ and $\tau_{23}=0.57$, $\tau_{13;2}=0.35$. 

Figure \ref{fig:GaussUnscaled} shows the copula density of the block-maxima of this vine with normalized margins (i.e. on the z-level) for block-sizes $n=10,50,10^3$. As above each row represents one block-size and contains three contour plots corresponding to z$_{3}$-values fixed to $\Phi^{-1}(0.2)$, $\Phi^{-1}(0.5)$ or $\Phi^{-1}(0.8)$, respectively. Again we detect convergence to the independence copula. This is also what one would expect: \cite{husler1989maxima} showed that in order to achieve that the distribution of the maxima of a multivariate Gaussian distribution converges to a non-trivial limiting distribution, a proper scaling of the margins and the correlation coefficients is necessary. This will be discussed in Section \ref{sec:scaledmax}.
\end{exa}
\begin{exa}\label{exa:data}
Hydrology is one of the areas where block-maxima are important. Especially, the water levels of rivers can be interesting when it comes to analyzing the risk of floods. We consider a three-dimensional data set containing the water levels of rivers in and around Munich, Germany, from August 1, 2007 to July 31, 2013. The data has been taken from Bavarian Hydrological Service (\texttt{http://www.gkd.bayern.de}). The three variables denote the differences of the 12 hour average water levels at the following three measuring points: the Isar measured in Munich, the Isar measured in Baierbrunn (south of Munich) and the Schwabinger Bach measured in Munich (a small stream entering the Isar in Garching, north of Munich). Since we only consider the hydrological winter (November 1 to April 30), we have 2176 data points.

First, we transform the margins to the unit interval applying the probability integral transform with the empirical marginal distribution functions. Then, we estimate the dependence structure using vine copulas\footnote{In order to assure that the necessary integrals are numerically tractable we had to exclude some pair-copula families (e.g. the t-copula).}: $c_{12}$ is estimated to be a Frank copula with a Kendall's $\tau$ of $\tau_{12}=0.76$, $c_{23}$ is a Frank copula with $\tau_{23}=0.23$ and $c_{13;2}$ is a Gaussian copula with $\tau_{13;2}=-0.18$. Now we are interested in the resulting copula density of the maxima for one day ($n=2$), one week ($n=14$), one month ($n=60$) and one winter ($n=362$). The respective contours (on the z-scale) are plotted in Figure \ref{fig:WD}.
\begin{figure}[!htb]
	\centering
		\includegraphics[trim=0cm 7cm 0cm 0.2cm,clip,width=\textwidth]{caption.eps}
		\includegraphics[width=\textwidth]{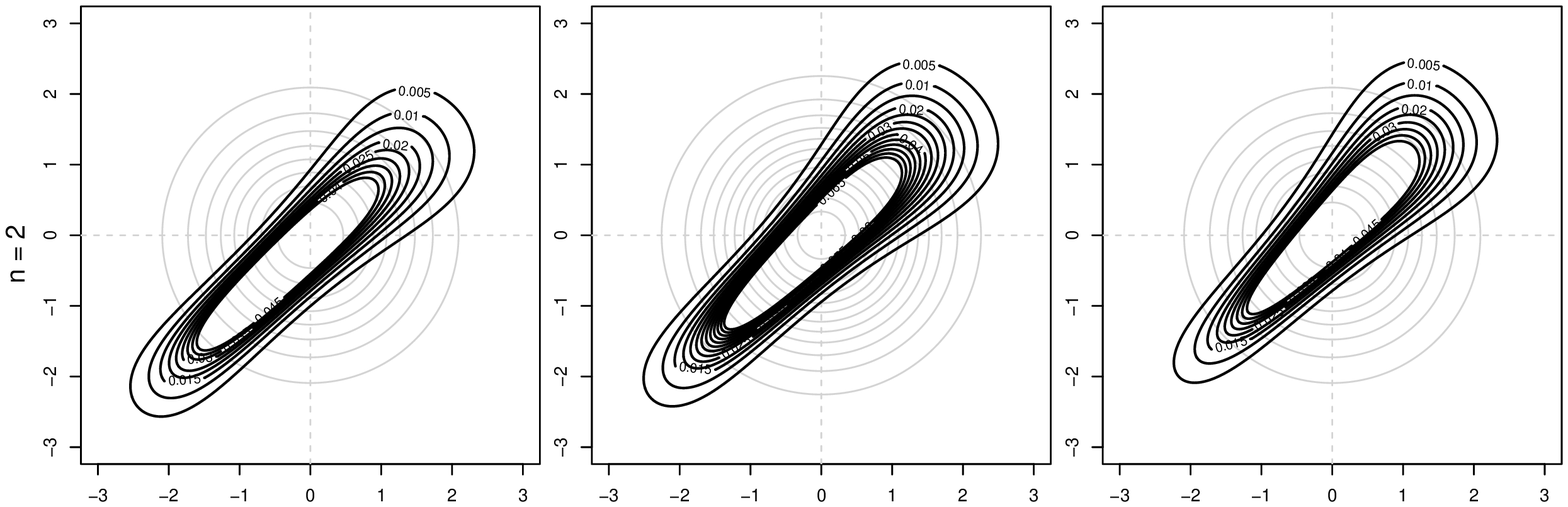}
		\includegraphics[width=\textwidth]{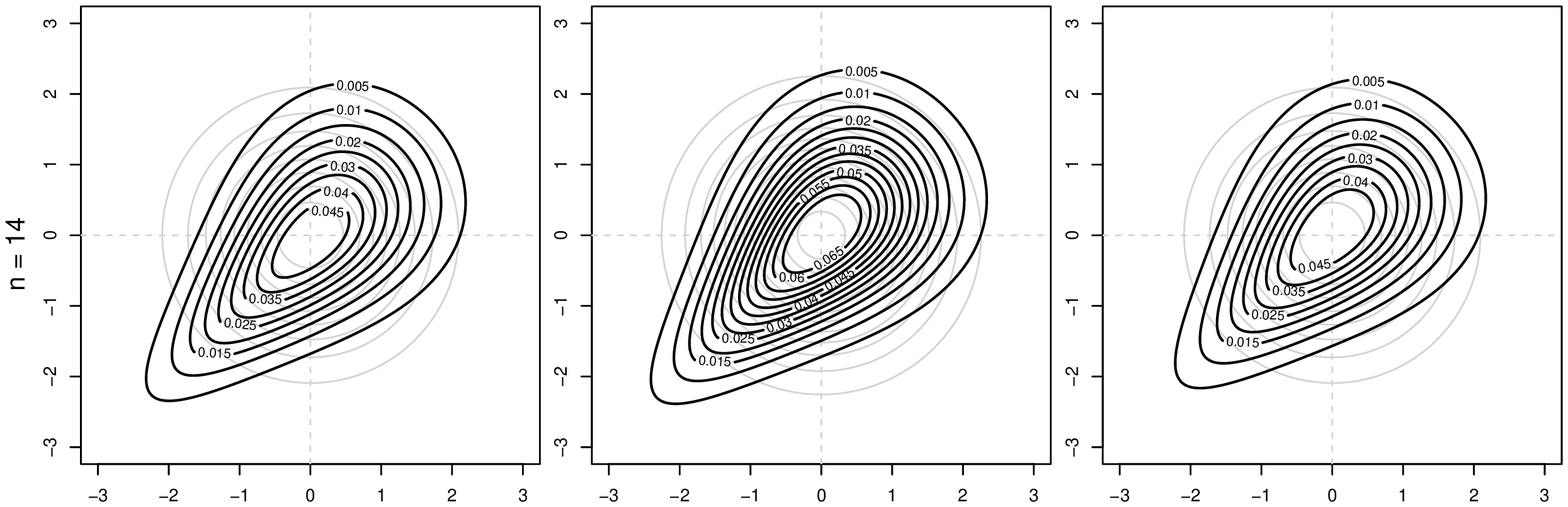}
		\includegraphics[width=\textwidth]{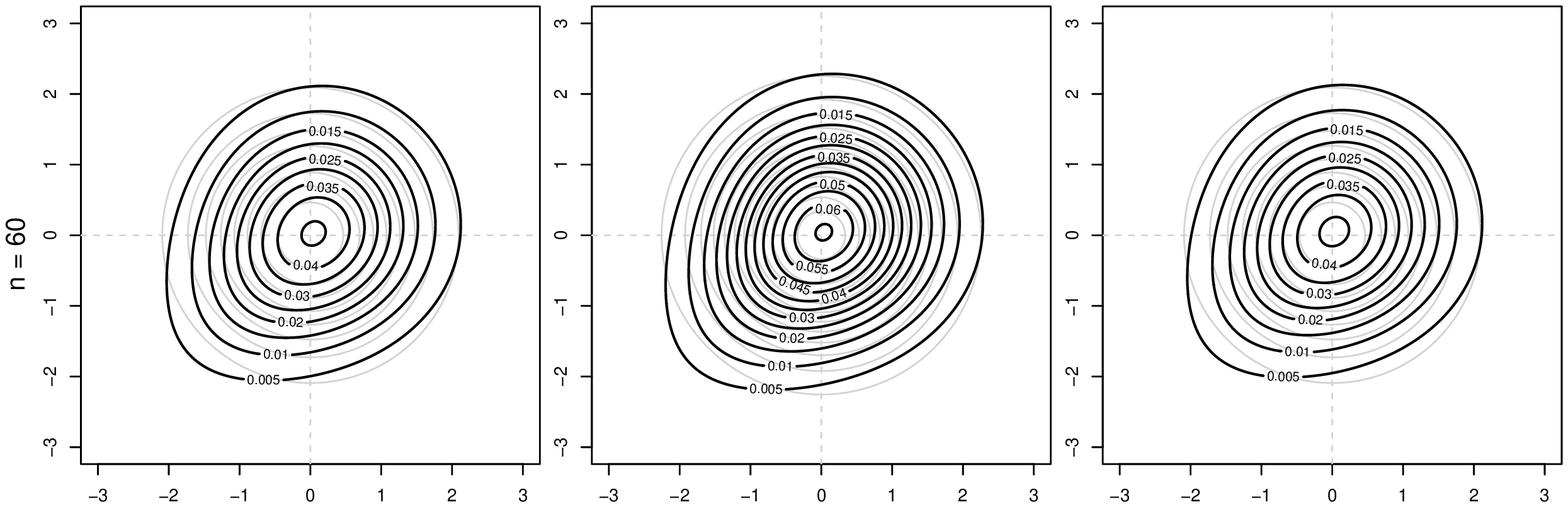}
		\includegraphics[width=\textwidth]{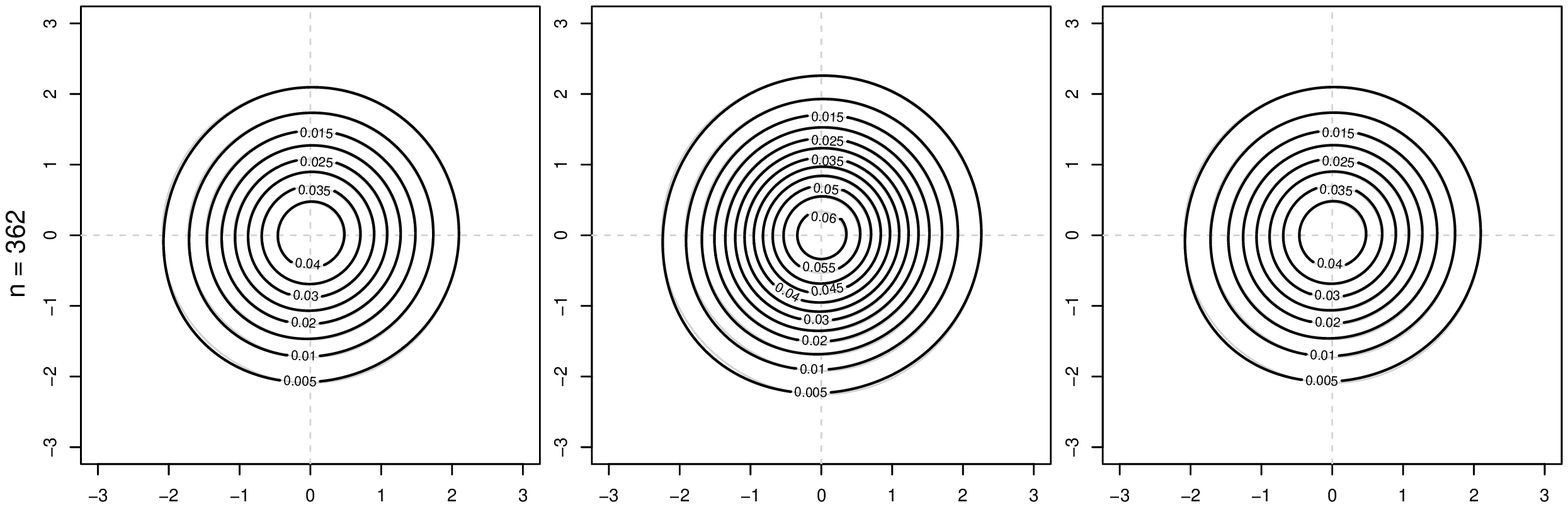}
	\caption{Two-dimensional slices of the copula density for block-maxima of the water level differences for one day, one week, one month and one winter with normalized margins.}
	\label{fig:WD}
\end{figure}
Similar to the examples from above we see that with increasing $n$ the observed dependence structure tends to the independence copula (gray contours in the background). In case of the considered rivers this means that the maximal differences of the 12 hour average water levels over the entire winter are almost independent.
\end{exa}

\section{Copula Density of Scaled Block-Maxima}
\label{sec:scaledmax}

Examples \ref{ex:clayton3d} and \ref{ex:gauss3d} show that scaling of the block-maxima is necessary to achieve a limiting copula. These limiting copulas are called extreme value copulas and are characterized by max-stability. A recent introduction to extreme value copulas is given by \cite{gudendorf2010extreme}.

Since \cite{husler1989maxima} derived the scaling for the block-maxima of the multivariate normal distribution with standard normally distributed margins $X_1,\ldots,X_d$ to a non-trivial extreme value copula, we 
use the same marginal scaling for the block-maxima $\Mb^{(n)}$ on the z-scale given by

\[
W^{(n)}_j := b_n \left(M_j^{(n)}-b_n\right),
\]
where $b_n$ satisfies $b_n=n\cdot\varphi(b_n)$ for $\varphi$ the standard normal density. Univariate extreme value theory gives that
\[
F_{W^{(n)}_j}(w_j)=P\left(W^{(n)}_j \leq w_j\right)=\Phi^n\left(b_n+ \frac{w_j}{b_n}\right) \to
\exp\{-e^{-w_j}\} \mbox{ as } n \to \infty
\]
for $w_j\in\R$. The marginal density of $W_j^{(n)}$ is given by
\[
f_{W^{(n)}_j}(w_j)=\frac{n}{b_n}\Phi^{n-1}\left(b_n+\frac{w_j}{b_n}\right)\varphi\left(b_n+\frac{w_j}{b_n}\right)
\]
for $w_j\in\R$, $j=1,\ldots,d$. Since $W_j^{(n)}$ is a strictly increasing transformation of $M_j^{(n)}$, the copula of $\Wb^{(n)}:=\left(W_1^{(n)},\ldots,W_d^{(n)}\right)$ is the same as the one of $\Mb^{(n)}$. Therefore, using \eqref{eq:maxcop} we obtain the following expression for the joint distribution of $\Wb^{(n)}$ 
\begin{equation*}
\begin{split}
F_{\Wb^{(n)}}(w_1,\ldots,w_d)&=P\left(W^{(n)}_1 \leq w_1, \ldots, W^{(n)}_d \leq w_d\right)\\
&=C_{\Mb^{(n)}}\left(\Phi^n\left(b_n+ \frac{w_1}{b_n}\right), \ldots,
\Phi^n\left(b_n+ \frac{w_d}{b_n}\right)\right) \\
 &=\left[ C\left(\Phi\left(b_n+ \frac{w_1}{b_n}\right), \ldots,
\Phi\left(b_n+ \frac{w_d}{b_n}\right)\right) \right]^n.
\end{split}
\end{equation*}
Similar arguments as in Corollary \ref{cor:maxzden} can be used to express the joint density of $\Wb^{(n)}$ in three dimensions for $n\geq3$ as
\[
\begin{split}
f&_{\Wb^{(n)}}(w_1,w_2,w_3)=\frac{1}{b_n^3}\prod_{j=1}^d \varphi\left(b_n+\frac{w_j}{b_n}\right)\Big\{nC\left(u_1,u_2,u_3\right)^{n-1}c\left(u_1,u_2,u_3\right)\Big.\\
&+n(n-1)C\left(u_1,u_2,u_3\right)^{n-2}\cdot\Big[\partial_1C\left(u_1,u_2,u_3\right)\partial_{23}C\left(u_1,u_2,u_3\right)\Big.
\end{split}
\]
\begin{equation}\label{eq:fWn}
\begin{split}
\quad&+\partial_2C\left(u_1,u_2,u_3\right)\partial_{13}C\left(u_1,u_2,u_3\right)\Big.+\partial_3C\left(u_1,u_2,u_3\right)\partial_{12}C\left(u_1,u_2,u_3\right)\Big]\\
\quad&+n(n-1)(n-2)C\left(u_1,u_2,u_3\right)^{n-3}\partial_1C\left(u_1,u_2,u_3\right)\\
\quad&\Big.\cdot\partial_2C\left(u_1,u_2,u_3\right)\partial_3C\left(u_1,u_2,u_3\right)
\Big\},
\end{split}
\end{equation}
where $u_j:=\Phi\left(b_n+ \frac{w_j}{b_n}\right)$ for $j=1,2,3$.\\

According to \cite{husler1989maxima}, besides scaling the maxima $M_1^{(n)},\ldots,M_d^{(n)}$, it is also necessary to change the correlation matrix $\Sigma(n)=\left(\rho_{i,j}(n)\right)_{1\leq i,j \leq d}$ of the underlying joint distribution of standard normal random variables $X_1,\ldots,X_n$, over whose i.i.d. copies $X_{i,1},\ldots,X_{i,d}$, $i=1,\ldots,n$, we take the maximum. The correlation matrices $\Sigma(n)$ have to satisfy the following condition
\begin{equation}\label{eq:rholambda}
\left(1-\rho_{i,j}(n)\right)\cdot\log(n)\to\lambda_{i,j}^2 \mbox{ as } n \to \infty,
\end{equation}
where $\lambda_{i,j}\in(0,\infty)$ are some constants for $1\leq i,j\leq d$, $i\neq j$ and $\lambda_{i,i}=0$ for $i=1,\ldots,d$. Since $\rho_{i,j}(n)=\rho_{j,i}(n)$, we also have $\lambda_{i,j}=\lambda_{j,i}$ for $1\leq i, j \leq d$. Note that \eqref{eq:rholambda} implies that $\rho_{i,j}(n)\to 1$ as $n \to \infty$. The limiting distribution $H_{\Lambda}$ of the scaled maxima depends on $\Lambda:=\left(\lambda_{i,j}\right)_{1\leq i,j \leq d}$.\\

In the following we will examine the three-dimensional case and choose values for $\lambda_{12},\lambda_{13},\lambda_{23}\in(0,\infty)$. For simplicity, we set
\begin{equation}\label{eq:rhon}
\rho_{i,j}(n):=1-\frac{\lambda_{i,j}^2}{\log(n)}
\end{equation}
for $1\leq i,j\leq 3$, $n\in\N$, such that Equation \ref{eq:rholambda} is always satisfied. However, for arbitrary $\lambda_{i,j}$ it is not trivial to decide whether we obtain a valid correlation matrix through this particular choice of $\rho_{i,j}(n)$ for any $n\in\N$. By construction the matrices 
\[
\Sigma(n)=\begin{pmatrix} 1 & \rho_{12}(n)& \rho_{13}(n)\\ 
												 \rho_{12}(n) & 1 & \rho_{23}(n)\\
												 \rho_{13}(n) & \rho_{23}(n) & 1\end{pmatrix}
\]
are symmetric and have ones on their diagonals. The only property we have to check is whether $\Sigma(n)$ is positive definite. For this we only need to check if the determinant of each leading principal minor is positive. Since $1>0$ and $1-\rho_{12}(n)^2>0$ are trivially satisfied, the only real requirement is that 
\[
\det(\Sigma(n))=1-\rho_{12}(n)\rho_{13}(n)\rho_{23}(n)-\rho_{12}(n)^2-\rho_{13}(n)^2-\rho_{23}(n)^2>0.
\]
Using Equation \ref{eq:rhon} we obtain that $\det(\Sigma(n))>0$ if and only if
\begin{equation}\label{eq:h}
2(\lambda_{12}^2\lambda_{13}^2+\lambda_{12}^2\lambda_{23}^2+\lambda_{13}^2\lambda_{23}^2)-(\lambda_{12}^4+\lambda_{13}^4+\lambda_{23}^4) \\>\frac{2\lambda_{12}^2\lambda_{13}^2\lambda_{23}^2}{\log(n)}.
\end{equation}
We denote the left-hand side of Equation \ref{eq:h} by $h(\lambda_{12}^2,\lambda_{13}^2,\lambda_{23}^2)$. Since the right-hand side of Equation \ref{eq:h} is always positive, it can only be satisfied if $h(\lambda_{12}^2,\lambda_{13}^2,\lambda_{23}^2)>0$. If $h(\lambda_{12}^2,\lambda_{13}^2,\lambda_{23}^2)>0$, then \eqref{eq:h} is satisfied for all $n\in\N$ with
\[
n\geq n^*:=\left\lfloor \exp\left\{\frac{2\lambda_{12}^2\lambda_{13}^2\lambda_{23}^2}{h(\lambda_{12}^2,\lambda_{13}^2,\lambda_{23}^2)}\right\}+1\right\rfloor,
\]
where $\left\lfloor \cdot\right\rfloor$ denotes the floor function. Table \ref{tab:hn} shows the values of $h$ and $n^*$ (if existing) for 10 different combinations of $\lambda_{12},\lambda_{13},\lambda_{23}$.

\begin{table}[ht]
\centering
\begin{tabular}{rrrrrr}
  \hline
 \rule{0pt}{2.5ex} \# & $\lambda_{12}^2$ & $\lambda_{13}^2$ & $\lambda_{23}^2$ & $h(\lambda_{12}^2,\lambda_{13}^2,\lambda_{23}^2)$ & $n^*$ \\ 
  \hline
	1 & 1 & 1 & 1 & 3 & 2 \\ 
  2 & 2 & 2 & 2 & 12 & 4 \\ 
  3 & 1 & 2 & 3 & 8 & 5 \\ 
  4 & 0.5 & 0.5 & 0.5 & 0.75 & 2 \\ 
  5 & 0.3 & 0.2 & 0.1 & 0.08 & 2 \\ 
  6 & 0.2 & 5 & 0.75 & -15.8 & -- \\ 
  7 & 15 & 20 & 15 & 800 & 76880 \\ 
  8 & 100 & 0.1 & 20 & -6376.01 & -- \\ 
  9 & 1.05 & 0.21 & 0.84 & 0.71 & 2 \\ 
  10 & 4 & 3 & 3 & 32 & 10\\ 
   \hline
\end{tabular}
\caption{Different $\lambda$-combinations and the corresponding values of $h$ and $n^*$.}
\label{tab:hn}
\end{table}
\cite{husler1989maxima} derived this scaling for multivariate normal distributions. Since we want to apply the scaling to vines, we need to transform the parameters of the normal distribution (correlations) to the parameters of the vine.

Considering the vine structure from \eqref{eq:pcc} we further assume that the pair-copulas are one-parametric. Having fixed $\lambda_{12},\lambda_{13},\lambda_{23}$ such that $h(\lambda_{12}^2,\lambda_{13}^2,\lambda_{23}^2)>0$, we can perform the following procedure for $n\geq n^*$:
\begin{enumerate}
	\item Calculate $\rho_{12}(n)$, $\rho_{13}(n)$ and $\rho_{23}(n)$ with the help of \eqref{eq:rhon}.
	\item Determine the corresponding partial correlation via
	\[	
	\rho_{13;2}(n)=\frac{\rho_{13}(n)-\rho_{12}(n)\rho_{23}(n)}{\sqrt{1-\rho_{12}(n)^2}\sqrt{1-\rho_{23}(n)^2}}.
	\]
	\item Translate the (partial) correlations $\rho_{12}(n)$, $\rho_{23}(n)$ and $\rho_{13;2}(n)$ into (partial) Kendall's $\tau$ values $\tau_{12}(n)$, $\tau_{23}(n)$ and $\tau_{13;2}(n)$ using the relation for elliptical distributions
	\[
	\tau=\frac{2}{\pi}\arcsin(\rho).
	\] 
	\item Determine the parameters $\theta_{12}(n)$, $\theta_{13}(n)$ and $\theta_{23}(n)$ of the pair copulas from the corresponding $\tau$ values.\footnote{In the \texttt{VineCopula} package this transformation can be performed by the function \texttt{BiCopTau2Par}.}
\end{enumerate}

Recall that $\rho_{12}(n)\to 1$, $\rho_{13}(n)\to 1$ and $\rho_{23}(n)\to 1$ as $n\to \infty$. Therefore, we also have $\tau_{12}(n)\to 1$, $\tau_{13}(n)\to 1$ and $\tau_{23}(n)\to 1$ as $n\to \infty$. However, the behavior of convergence of $\rho_{13;2}(n)$ and hence $\tau_{13;2}(n)$ is not trivial. We use \eqref{eq:rhon} to obtain
\[
\begin{split}
\rho_{13;2}(n)&=\frac{\left(1-\frac{\lambda_{13}^2}{\log(n)}\right)-\left(1-\frac{\lambda_{12}^2}{\log(n)}\right)\left(1-\frac{\lambda_{23}^2}{\log(n)}\right)}{\sqrt{1-\left(1-\frac{\lambda_{12}^2}{\log(n)}\right)^2}\sqrt{1-\left(1-\frac{\lambda_{23}^2}{\log(n)}\right)^2}}
\to \frac{\lambda_{12}^2+\lambda_{23}^2-\lambda_{13}^2}{2\lambda_{12}\lambda_{23}}
\end{split}
\]
as $n\to\infty$. Thus, 
\[
\tau_{13;2}(n)\to \frac{2}{\pi}\arcsin\left( \frac{\lambda_{12}^2+\lambda_{23}^2-\lambda_{13}^2}{2\lambda_{12}\lambda_{23}}\right) \mbox{ as } n\to \infty. \\
\]
For illustration, we will now take combinations 9 and 10 from Table \ref{tab:hn}. We show the (partial) correlations from Step 2 of the above procedure as well as the (partial) Kendall's $\tau$ values since they can be compared independently from the choice the respective pair-copulas.
\begin{table}[ht]
\centering
\begin{tabular}{l|rrrrrrr}
  \hline
 \rule{0pt}{2ex}	&$n$ & $\rho_{12}(n)$ & $\rho_{23}(n)$& $\rho_{13;2}(n)$ & $\tau_{12}(n)$ & $\tau_{23}(n)$& $\tau_{13;2}(n)$ \\ 
  \hline
	 \rule{0pt}{2ex}	Combination 9 &10 & 0.54 & 0.64 & 0.87 & 0.37 & 0.44 & 0.67 \\ 
   &50 & 0.73 & 0.79 & 0.88 & 0.52 & 0.57 & 0.69 \\ 
   &$10^3$ & 0.85 & 0.88 & 0.89 & 0.64 & 0.68 & 0.69 \\ 
   &$\infty$ & 1 & 1 & 0.89 & 1 & 1 & 0.70 \\
	\hline
	\rule{0pt}{2ex} Combination 10 &10 & -0.74 & -0.30 & -0.82 & -0.53 & -0.20 & -0.61 \\ 
  &50 & -0.02 & 0.23 & 0.25 & -0.01 & 0.15 & 0.16 \\ 
  &$10^3$ & 0.42 & 0.57 & 0.44 & 0.28 & 0.38 & 0.29 \\ 
	&$\infty$ & 1 & 1 & 0.58 & 1 & 1 & 0.39 \\
	 \hline
\end{tabular}
\caption{Overview over the (partial) correlations and (partial) Kendall's $\tau$ values for different $n$ for combinations 9 ($\lambda_{12}^2=1.05$, $\lambda_{13}^2=0.21$, $\lambda_{23}^2=0.84$) and 10 ($\lambda_{12}^2=4$, $\lambda_{13}^2=3$, $\lambda_{23}^2=3$).}
\label{tab:rhotau1}
\end{table}

If we compare the values from Table \ref{tab:rhotau1} for combinations 9 and 10, it is eye-catching that the choice of $\lambda_{12}$, $\lambda_{13}$ and $\lambda_{23}$ has a crucial influence on the behavior of the (partial) correlations and the (partial) Kendall's $\tau$ values. In the first case the parameters are already relatively close to their limiting values for $n=10^3$, whereas in the second case they are still rather far from their limits for $n=10^3$. Further, we see that the limiting values of $\rho_{13;2}(n)$ and $\tau_{13;2}(n)$ can be very different depending on the choice $\lambda_{12}$, $\lambda_{13}$ and $\lambda_{23}$.\\

Now we examine the behavior of the three-dimensional density of the scaled block-maxima $f_{\Wb^{(n)}}$ for increasing values of $n$.
\begin{figure}[!htb]
	\centering
		\includegraphics[trim=0cm 7cm 0cm 0.2cm,clip,width=\textwidth]{caption.eps}
		\includegraphics[width=\textwidth]{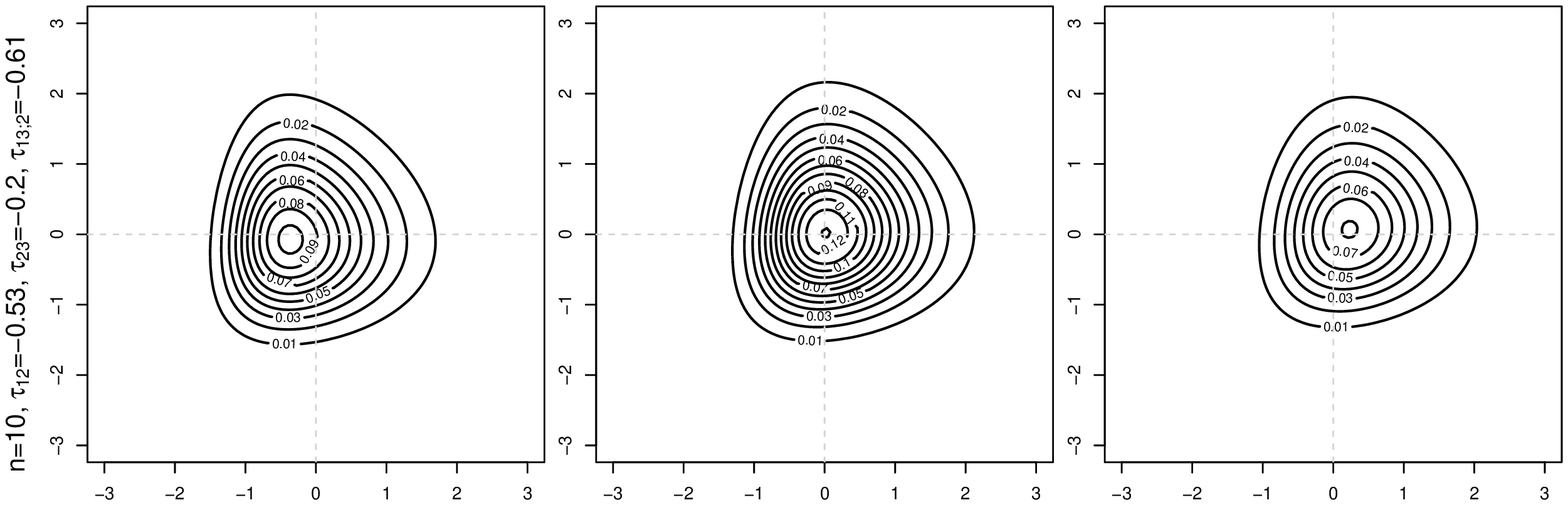}
		\includegraphics[width=\textwidth]{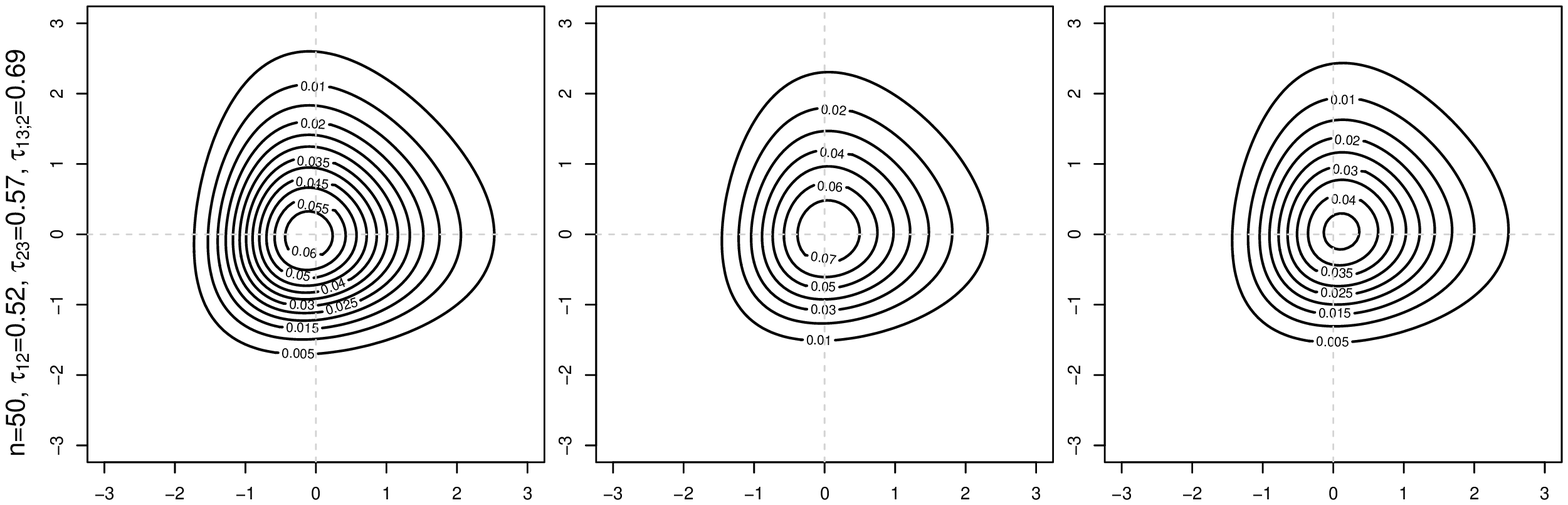}
		\includegraphics[width=\textwidth]{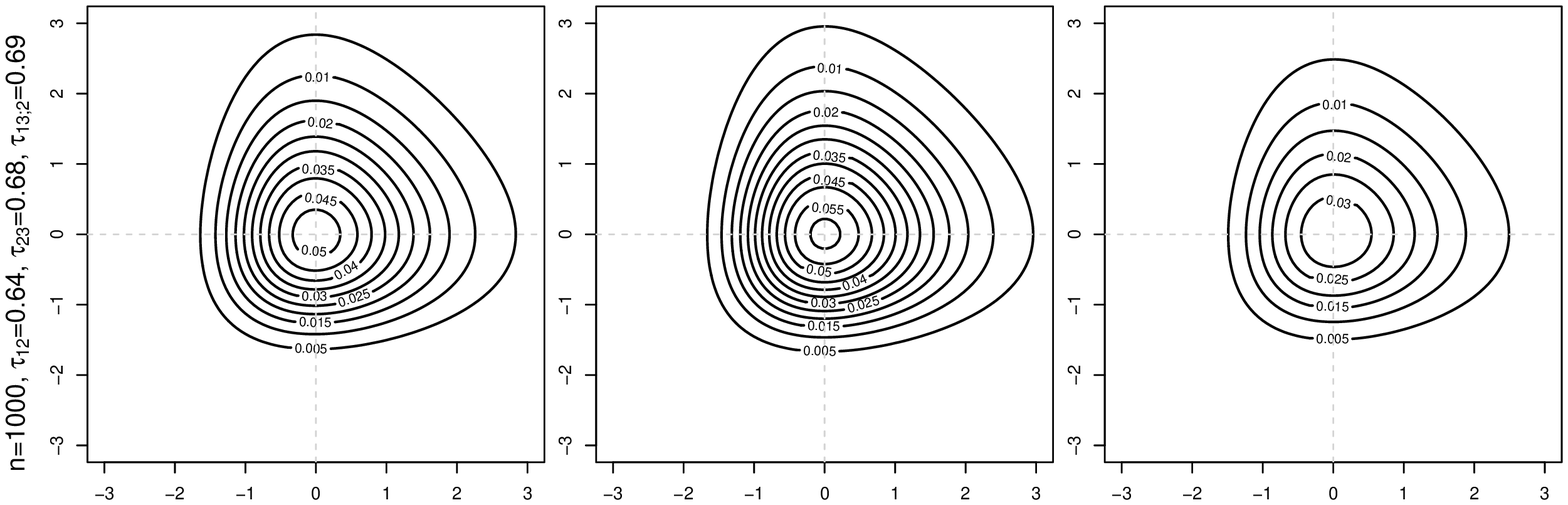}
	\caption{Two-dimensional slices of the density of the scaled block-maxima of a three-dimensional Clayton-vine ($\lambda_{12}^2=1.05$, $\lambda_{13}^2=0.21$, $\lambda_{23}^2=0.84$).}
	\label{fig:ClaytonScaled}
\end{figure}
\begin{exa}
\label{ex:clayton3dscaled}
First we look at a Clayton-vine and choose $\lambda_{12}^2=1.05$, $\lambda_{13}^2=0.21$ and $\lambda_{23}^2=0.84$ (combination 9). The parameters and Kendall's $\tau$ values depend on the block-size $n$.

Figure \ref{fig:ClaytonScaled} shows the density of the scaled block-maxima of the Clayton-vine for block-sizes $n=10$, $50$, $10^3$. Each row represents one block-size (and thus parameter set) and contains three contour plots corresponding to z$_{3}$-values fixed to $F_{W_3^{(n)}}^{-1}(0.2)$, $F_{W_3^{(n)}}^{-1}(0.5)$ or $F_{W_3^{(n)}}^{-1}(0.8)$, respectively. 
\end{exa}
\begin{figure}[!htb]
	\centering
		\includegraphics[trim=0cm 7cm 0cm 0.2cm,clip,width=\textwidth]{caption.eps}
		\includegraphics[width=\textwidth]{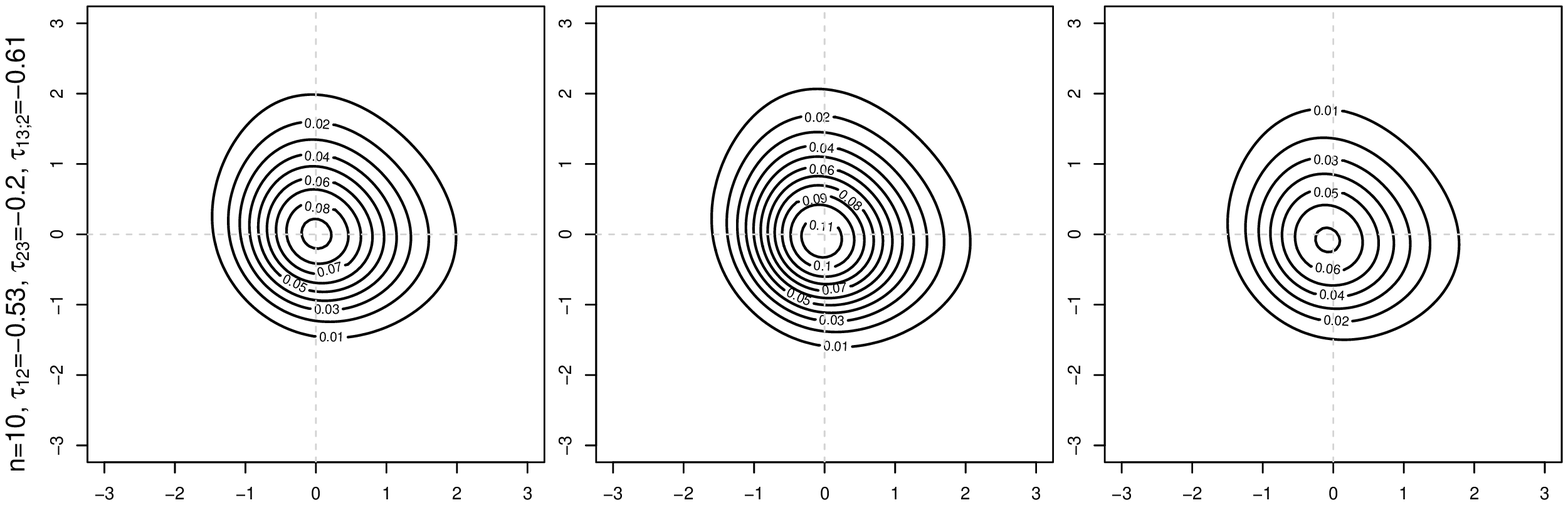}
		\includegraphics[width=\textwidth]{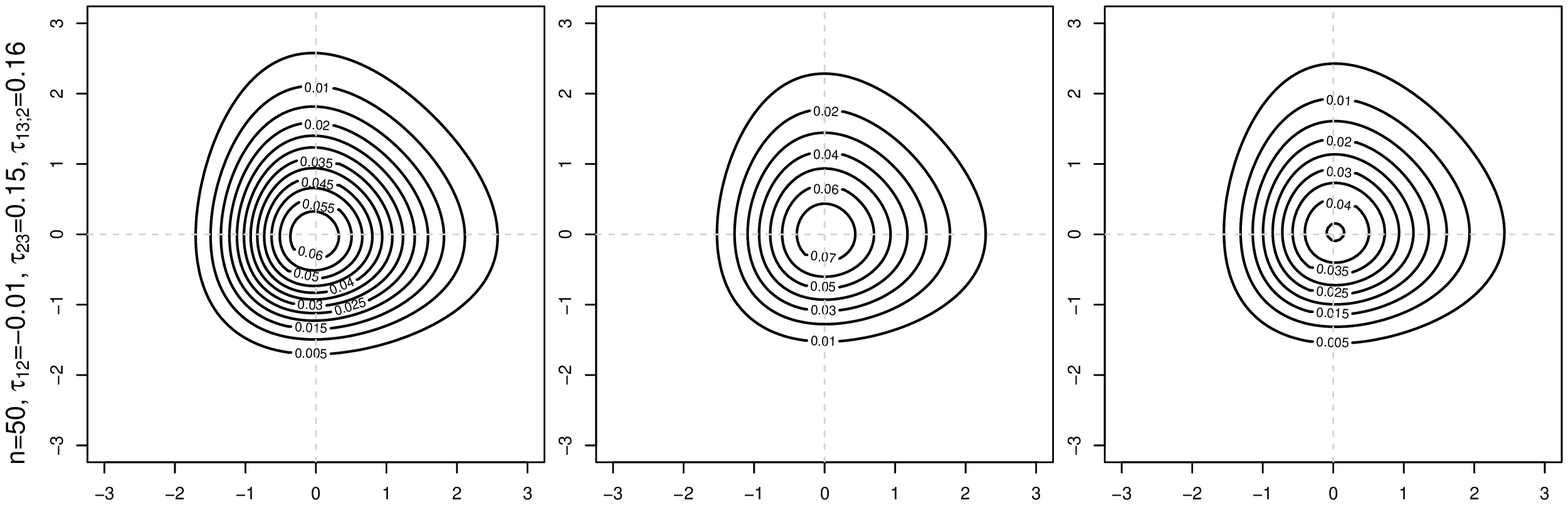}
		\includegraphics[width=\textwidth]{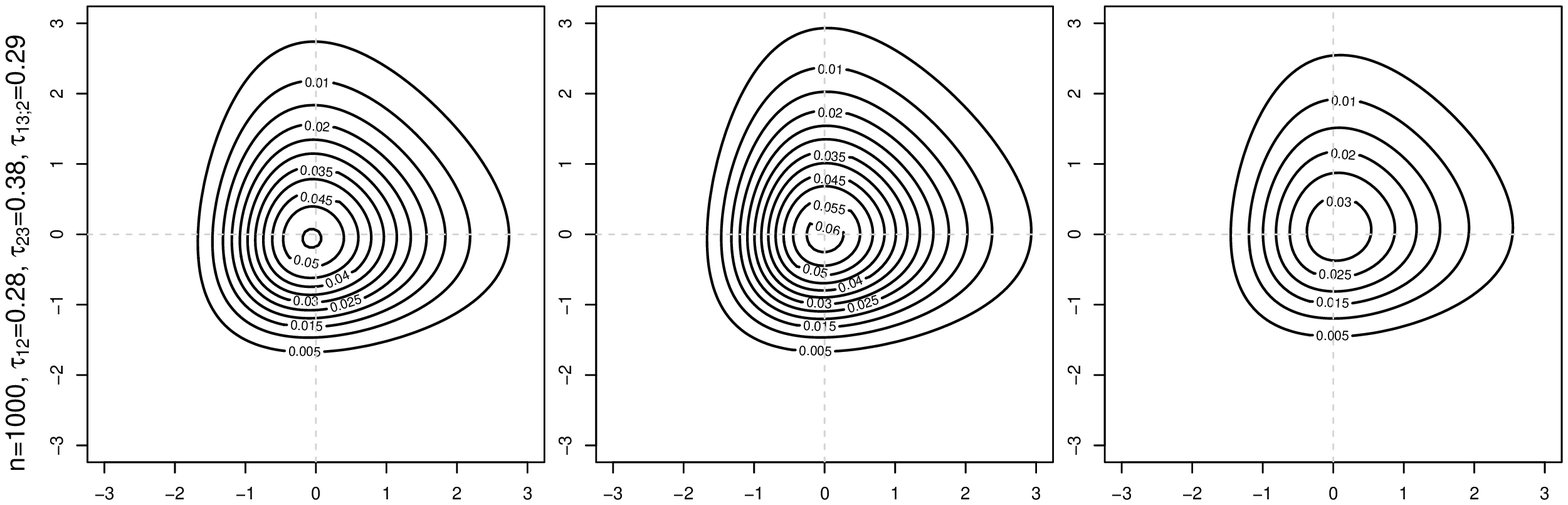}
	\caption{Two-dimensional slices of the density of the scaled block-maxima of a three-dimensional Gaussian vine ($\lambda_{12}^2=4$, $\lambda_{13}^2=3$, $\lambda_{23}^2=3$).}
	\label{fig:GaussScaled}
\end{figure}
\begin{exa}
\label{ex:gauss3dscaled}
As a second example we choose a Gaussian vine with $\lambda_{12}^2=4$, $\lambda_{13}^2=3$ and $\lambda_{23}^2=3$ (combination 10).

Figure \ref{fig:GaussScaled} shows the density of the scaled block-maxima of the Gaussian vine for block-sizes $n=10$, $50$, $10^3$. Again, three contour plots corresponding to z$_{3}$-values fixed to $F_{W_3^{(n)}}^{-1}(0.2)$, $F_{W_3^{(n)}}^{-1}(0.5)$ or $F_{W_3^{(n)}}^{-1}(0.8)$, respectively, are displayed per row. The block-size and Kendall's $\tau$ values are denoted on the left for each row. 
\end{exa}

\section*{Conclusion}
In this chapter we showed that the copula density of the block-maxima of multivariate distributions can be expressed explicitly. For three-dimensional vine copulas we made use of the fact that we can compute their partial derivatives by one-dimensional integration, which makes the evaluation of the copula density for block-maxima numerically tractable. The advantage of our method is that we can use the entire sample for estimation instead of reducing the sample size by taking the maximum over $n$ observations. Once we have estimated the underlying dependence structure we can derive the copula density of the block-maxima for any block-size (even larger than the original sample size). From the Clayton and the Gauss examples we have seen that without proper scaling the block-maxima do not approach a non-trivial limiting distribution for increasing block-size.

\section*{Appendix}\label{sec:appendix}
\subsection*{Proof of Theorem \ref{thm:copuladensity}}

In order to prove Theorem \ref{thm:copuladensity} we prove an auxiliary lemma from which Theorem \ref{thm:copuladensity} follows as a corollary.
\begin{lem}\label{lem:claim1}
For $k\in\left\{1,\ldots,d\right\}$ and $u_j\in[0,1]$, $j=1\ldots,d$, we have
\begin{multline*}
\frac{\partial^k }{\partial u_1\cdots \partial u_k}\left[C(u_1^{1/n},\ldots,u_d^{1/n})^n\right]=
\frac{1}{n^k}\left(\prod_{j=1}^k u_j\right)^{\frac{1}{n}-1}\\ \cdot\sum_{j=1}^{k \wedge n}\left\{\frac{n!}{(n-j)!}\cdot C\left(u_1^{1/n},\ldots,u_d^{1/n}\right)^{n-j} \sum_{\mathcal{P}\in\mathcal{S}_{k,j}} \prod_{M\in\mathcal{P}} \partial_M C\left(u_1^{1/n},\ldots,u_d^{1/n}\right) \right\}.
\end{multline*}

\end{lem}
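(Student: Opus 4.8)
The plan is to prove the identity by induction on $k$, the engine being the product and chain rules together with the recursive structure of set partitions. Throughout, write $(\cdot)$ as shorthand for evaluation at the point $(u_1^{1/n},\ldots,u_d^{1/n})$. For the base case $k=1$, the chain rule gives
\[
\frac{\partial}{\partial u_1}\left[C(\cdot)^n\right]=n\,C(\cdot)^{n-1}\,\partial_1 C(\cdot)\cdot\tfrac{1}{n}u_1^{1/n-1},
\]
which is exactly the claimed right-hand side: $1\wedge n=1$, $\mathcal{S}_{1,1}$ consists of the single partition $\{\{1\}\}$ so that the inner product is $\partial_1 C(\cdot)$, and $n!/(n-1)!=n$.

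For the inductive step, assume the formula for some $k<d$ and differentiate both sides with respect to $u_{k+1}$. The first key observation is that the prefactor $\frac{1}{n^k}\left(\prod_{j=1}^k u_j\right)^{1/n-1}$ does not involve $u_{k+1}$ and is carried along as a constant, while every differentiation of a factor depending on the $(k+1)$-th argument $u_{k+1}^{1/n}$ produces, by the chain rule, an extra $\frac{1}{n}u_{k+1}^{1/n-1}$. These combine to upgrade the prefactor to $\frac{1}{n^{k+1}}\left(\prod_{j=1}^{k+1} u_j\right)^{1/n-1}$, as required, so it remains only to track the inner sum.

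Applying the product rule to a single summand $\frac{n!}{(n-j)!}C(\cdot)^{n-j}\prod_{M\in\mathcal{P}}\partial_M C(\cdot)$ produces two kinds of terms. Differentiating the power $C(\cdot)^{n-j}$ yields $(n-j)C(\cdot)^{n-j-1}\partial_{k+1}C(\cdot)$: it adjoins a new singleton block $\{k+1\}$, lowers the exponent to $n-(j+1)$, and multiplies the coefficient by $(n-j)$, so that $\frac{n!}{(n-j)!}\cdot(n-j)=\frac{n!}{(n-(j+1))!}$ --- precisely the coefficient feeding the $(j+1)$-term. Differentiating a factor $\partial_M C(\cdot)$ replaces it by $\partial_{M\cup\{k+1\}}C(\cdot)$: it enlarges the block $M$ by $k+1$, leaves the number of blocks at $j$ and the coefficient unchanged, and feeds the $j$-term. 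The main step is the combinatorial matching: each $\mathcal{Q}\in\mathcal{S}_{k+1,j}$ is generated exactly once, because the block $B\in\mathcal{Q}$ containing $k+1$ is unique. If $|B|\geq 2$, then $\mathcal{Q}$ comes from enlarging $B\setminus\{k+1\}$ in the partition $\mathcal{P}=(\mathcal{Q}\setminus\{B\})\cup\{B\setminus\{k+1\}\}\in\mathcal{S}_{k,j}$; if $B=\{k+1\}$, then $\mathcal{Q}$ comes from adjoining the singleton to $\mathcal{P}=\mathcal{Q}\setminus\{\{k+1\}\}\in\mathcal{S}_{k,j-1}$. This is the standard recursion $\mathcal{S}_{k+1,j}\leftrightarrow\mathcal{S}_{k,j}\sqcup\mathcal{S}_{k,j-1}$, and together with the coefficient computation it reassembles the full $j$-indexed sum at level $k+1$.

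The hard part will be handling the summation limits at the boundary together with the vanishing coefficients. When $j=n$ the factor $(n-j)=0$ annihilates the new-singleton branch, so no term with more than $n$ blocks is ever created; this is exactly why the upper index is $(k+1)\wedge n$ rather than $k+1$. Conversely, when $k<n$ the top index $j=k$ genuinely contributes a $j=k+1$ term because $(n-k)>0$. Verifying that these boundary behaviors mesh consistently with the range $\sum_{j=1}^{(k+1)\wedge n}$ for all pairs $(k,n)$ is the only delicate bookkeeping; once it is checked, the theorem follows by setting $k=d$.
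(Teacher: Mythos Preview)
Your proposal is correct and follows essentially the same approach as the paper: induction on $k$, the chain rule to extract the prefactor $\frac{1}{n}u_{k+1}^{1/n-1}$, the product rule to split into the two branches (new singleton $\{k+1\}$ versus enlarged block $M\cup\{k+1\}$), and then the partition recursion to reassemble $\sum_{\mathcal{P}\in\mathcal{S}_{k+1,j}}$. The only cosmetic difference is that you phrase the combinatorial step directly as the bijection $\mathcal{S}_{k+1,j}\leftrightarrow\mathcal{S}_{k,j}\sqcup\mathcal{S}_{k,j-1}$ and handle the boundary uniformly via the vanishing factor $(n-j)$ at $j=n$, whereas the paper carries out an explicit case split $n>k$ versus $n\le k$ and invokes the observations that $\{k+1\}\notin\mathcal{P}$ for $\mathcal{P}\in\mathcal{S}_{k+1,1}$ and $\{k+1\}\in\mathcal{P}$ for $\mathcal{P}\in\mathcal{S}_{k+1,k+1}$.
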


\begin{proof}
We will prove this statement using induction. For $k=1$ we have
{\small
\begin{multline*}
\frac{\partial}{\partial u_1}\left[C(u_1^{1/n},\ldots,u_d^{1/n})^n\right]= n C(u_1^{1/n},\ldots,u_d^{1/n})^{n-1}\partial_1 C(u_1^{1/n},\ldots,u_d^{1/n})\frac{1}{n}u_1^{\frac{1}{n}-1} = \\
\frac{1}{n^1}\left(\prod_{j=1}^1u_j\right)^{\frac{1}{n}-1}\sum_{j=1}^{1\wedge n}\left\{ \frac{n!}{(n-j)!} C(u_1^{1/n},\ldots,u_d^{1/n})^{n-j}\sum_{\mathcal{P}\in\mathcal{S}_{1,j}}\prod_{M\in\mathcal{P}}\partial_M C(u_1^{1/n},\ldots,u_d^{1/n})\right\}.
\end{multline*}
}
\hspace{-1.75mm}The inductive step ($k\to k+1$) proceeds as follows
{\small
\[
\frac{\partial^{k+1} }{\partial u_1\cdots \partial u_{k+1}}\left[C(u_1^{1/n},\ldots,u_d^{1/n})^n\right]=\frac{\partial}{\partial u_{k+1}}\left\{ \frac{\partial^{k} }{\partial u_1\cdots \partial u_{k}}\left[C(u_1^{1/n},\ldots,u_d^{1/n})^n\right]\right\}
=:\left(*\right)_1.
\]
}
\hspace{-1.75mm}Applying the inductive assumption yields
\begin{multline*}
\left(*\right)_1=\frac{\partial}{\partial u_{k+1}}\left\{ \frac{1}{n^k}\left(\prod_{j=1}^k u_j\right)^{\frac{1}{n}-1} \sum_{j=1}^{k \wedge n}\left\{\frac{n!}{(n-j)!}\cdot C\left(u_1^{1/n},\ldots,u_d^{1/n}\right)^{n-j}\right.\right.\\
\cdot \big.\big.\sum_{\mathcal{P}\in\mathcal{S}_{k,j}} \prod_{M\in\mathcal{P}} \partial_M C\left(u_1^{1/n},\ldots,u_d^{1/n}\right) \bigg\}\Bigg\}=:\left(*\right)_2
\end{multline*}
We will consider the cases "$n>k$" and "$n\leq k$" separately. We begin with Case 1 ($n> k$): We have $k\wedge n=k$ and hence
\[
\begin{split}
\left(*\right)_2&=
\frac{1}{n^k}\left(\prod_{j=1}^k u_j\right)^{\frac{1}{n}-1} \sum_{j=1}^{k} \Bigg\{\frac{n!}{(n-j)!}\cdot \left\{\frac{\partial}{\partial u_{k+1}}\left[C\left(u_1^{1/n},\ldots,u_d^{1/n}\right)^{n-j}\right]\right.\Bigg.\\
&\quad\cdot \big.\sum_{\mathcal{P}\in\mathcal{S}_{k,j}} \prod_{M\in\mathcal{P}} \partial_M C\left(u_1^{1/n},\ldots,u_d^{1/n}\right) + C\left(u_1^{1/n},\ldots,u_d^{1/n}\right)^{n-j}\\
&\quad \cdot \frac{\partial}{\partial u_{k+1}}\left[\sum_{\mathcal{P}\in\mathcal{S}_{k,j}} \prod_{M\in\mathcal{P}} \partial_M C\left(u_1^{1/n},\ldots,u_d^{1/n}\right)\right]
\Bigg\}=\left(*\right)_3.
\end{split}
\]
Now we use that fact that for $k\in\left\{1,\ldots,d-1\right\}$ and $j\in\left\{1,\ldots,k\wedge n\right\}$ we have
{\small
 \be\label{eq:claim2}
\frac{\partial}{\partial u_{k+1}}\left[\sum_{\mathcal{P}\in\mathcal{S}_{k,j}} \prod_{M\in\mathcal{P}} \partial_M C\left(u_1^{1/n},\ldots,u_d^{1/n}\right)\right]= \frac{u_{k+1}^{\frac{1}{n}-1}}{n} \sum_{\substack{\mathcal{P}\in\mathcal{S}_{k+1,j} \\ \left\{k+1\right\}\not\in\mathcal{P}}} \prod_{M\in\mathcal{P}} \partial_M C\left(u_1^{1/n},\ldots,u_d^{1/n}\right).
\ee
}
\hspace{-1.75mm}Applying Equation \ref{eq:claim2} yields
{\small
\[
\begin{split}
\left(*\right)_3&=
\frac{1}{n^k}\left(\prod_{j=1}^k u_j\right)^{\frac{1}{n}-1} \Bigg\{ \sum_{j=1}^{k} \frac{n!}{(n-j)!}\cdot (n-j) \cdot C\left(u_1^{1/n},\ldots,u_d^{1/n}\right)^{n-j-1} \Bigg.\\
&\quad\cdot \partial_{k+1} C\left(u_1^{1/n},\ldots,u_d^{1/n}\right)\frac{1}{n}u_{k+1}^{\frac{1}{n}-1}\cdot \sum_{\mathcal{P}\in\mathcal{S}_{k,j}} \prod_{M\in\mathcal{P}} \partial_M C\left(u_1^{1/n},\ldots,u_d^{1/n}\right) \\
&\quad \big.+ \sum_{j=1}^{k} \frac{n!}{(n-j)!}\cdot C\left(u_1^{1/n},\ldots,u_d^{1/n}\right)^{n-j}\cdot \frac{1}{n}u_{k+1}^{\frac{1}{n}-1} \sum_{\substack{\mathcal{P}\in\mathcal{S}_{k+1,j} \\ \left\{k+1\right\}\not\in\mathcal{P}}} \prod_{M\in\mathcal{P}} \partial_M C\left(u_1^{1/n},\ldots,u_d^{1/n}\right)
\Bigg\}\\
&=\frac{1}{n^{k+1}}\left(\prod_{j=1}^{k+1} u_j\right)^{\frac{1}{n}-1} \Bigg\{ \sum_{j=1}^{k} \frac{n!}{(n-(j+1))!} \cdot C\left(u_1^{1/n},\ldots,u_d^{1/n}\right)^{n-(j+1)} \Bigg.\\
&\quad\cdot \sum_{\substack{\mathcal{P}\in\mathcal{S}_{k+1,j+1} \\ \left\{k+1\right\}\in\mathcal{P}}} \prod_{M\in\mathcal{P}} \partial_M C\left(u_1^{1/n},\ldots,u_d^{1/n}\right) \\
&\quad \big.+ \sum_{j=1}^{k} \frac{n!}{(n-j)!}\cdot C\left(u_1^{1/n},\ldots,u_d^{1/n}\right)^{n-j}\cdot \sum_{\substack{\mathcal{P}\in\mathcal{S}_{k+1,j} \\ \left\{k+1\right\}\not\in\mathcal{P}}} \prod_{M\in\mathcal{P}} \partial_M C\left(u_1^{1/n},\ldots,u_d^{1/n}\right)
\Bigg\} =\left(*\right)_4.\\
\end{split}
\]
}
\hspace{-1.75mm}We perform an index shift in the first sum such that $j+1$ is replaced by $j$ and make use of the following two properties:
\begin{itemize}
\item[(a)] For all $\mathcal{P}\in\mathcal{S}_{l,1}=\left\{\left\{\left\{1,\ldots,l\right\}\right\}\right\}$ holds that $\left\{l\right\}\not\in \mathcal{P}$.
\item[(b)] For all $\mathcal{P}\in\mathcal{S}_{l,l}=\left\{\left\{\left\{1\right\},\ldots,\left\{l\right\}\right\}\right\}$ holds that $\left\{l\right\}\in \mathcal{P}$.
\end{itemize}
\hspace{-1.75mm}This results in
{\small
\[
\begin{split}
\left(*\right)_4&=\frac{1}{n^{k+1}}\left(\prod_{j=1}^{k+1} u_j\right)^{\frac{1}{n}-1} \Bigg\{ \sum_{j=1}^{k+1} \frac{n!}{(n-j)!} \cdot C\left(u_1^{1/n},\ldots,u_d^{1/n}\right)^{n-j} \Bigg.\\
&\quad\cdot \sum_{\substack{\mathcal{P}\in\mathcal{S}_{k+1,j} \\ \left\{k+1\right\}\in\mathcal{P}}} \prod_{M\in\mathcal{P}} \partial_M C\left(u_1^{1/n},\ldots,u_d^{1/n}\right) \\
&\quad \big.+ \sum_{j=1}^{k+1} \frac{n!}{(n-j)!}\cdot C\left(u_1^{1/n},\ldots,u_d^{1/n}\right)^{n-j}\cdot \sum_{\substack{\mathcal{P}\in\mathcal{S}_{k+1,j} \\ \left\{k+1\right\}\not\in\mathcal{P}}} \prod_{M\in\mathcal{P}} \partial_M C\left(u_1^{1/n},\ldots,u_d^{1/n}\right)
\Bigg\}\\
&=\frac{1}{n^{k+1}}\left(\prod_{j=1}^{k+1} u_j\right)^{\frac{1}{n}-1} \Bigg\{ \sum_{j=1}^{(k+1)\wedge n} \frac{n!}{(n-j)!} \cdot C\left(u_1^{1/n},\ldots,u_d^{1/n}\right)^{n-j} \Bigg.\\
&\quad\cdot \Bigg. \sum_{\mathcal{P}\in\mathcal{S}_{k+1,j}} \prod_{M\in\mathcal{P}} \partial_M C\left(u_1^{1/n},\ldots,u_d^{1/n}\right) \Bigg\},
\end{split}
\]
}
\hspace{-1.75mm}where we used the fact that $k+1=(k+1)\wedge n$ since $n>k$. This concludes the first case. Case 2 ($n\leq k$) is similar to the first one. The main difference is that $C\left(u_1^{1/n},\ldots,u_d^{1/n}\right)^{n-j}=1$ for $j=n$, which was not possible before since $j\leq k<n$. Now $k\wedge n=n$ and therefore we obtain
{
\small
\[
\begin{split}
\left(*\right)_2&=
\frac{1}{n^k}\left(\prod_{j=1}^k u_j\right)^{\frac{1}{n}-1} \Bigg\{ \sum_{j=1}^{n-1} \frac{n!}{(n-j)!}\cdot (n-j)\cdot C\left(u_1^{1/n},\ldots,u_d^{1/n}\right)^{n-j-1}\Bigg.\\
&\quad\cdot  \partial_{k+1} C\left(u_1^{1/n},\ldots,u_d^{1/n}\right)\frac{1}{n}u_{k+1}^{\frac{1}{n}-1}\cdot\sum_{\mathcal{P}\in\mathcal{S}_{k,j}} \prod_{M\in\mathcal{P}} \partial_M C\left(u_1^{1/n},\ldots,u_d^{1/n}\right) \\
&\quad \big.+ \sum_{j=1}^{n} \frac{n!}{(n-j)!} C\left(u_1^{1/n},\ldots,u_d^{1/n}\right)^{n-j} \frac{\partial}{\partial u_{k+1}}\left[\sum_{\mathcal{P}\in\mathcal{S}_{k,j}} \prod_{M\in\mathcal{P}} \partial_M C\left(u_1^{1/n},\ldots,u_d^{1/n}\right)\right]
\Bigg\}\\
&=  \frac{1}{n^{k+1}}\left(\prod_{j=1}^{k+1} u_j\right)^{\frac{1}{n}-1} \Bigg\{ \sum_{j=1}^{n-1} \frac{n!}{(n-(j+1))!} \cdot C\left(u_1^{1/n},\ldots,u_d^{1/n}\right)^{n-(j+1)} \Bigg.\\
&\quad\cdot \sum_{\substack{\mathcal{P}\in\mathcal{S}_{k+1,j+1} \\ \left\{k+1\right\}\in\mathcal{P}}} \prod_{M\in\mathcal{P}} \partial_M C\left(u_1^{1/n},\ldots,u_d^{1/n}\right) \\
&\quad \big.+ \sum_{j=1}^{n} \frac{n!}{(n-j)!}\cdot C\left(u_1^{1/n},\ldots,u_d^{1/n}\right)^{n-j}\cdot \sum_{\substack{\mathcal{P}\in\mathcal{S}_{k+1,j} \\ \left\{k+1\right\}\not\in\mathcal{P}}} \prod_{M\in\mathcal{P}} \partial_M C\left(u_1^{1/n},\ldots,u_d^{1/n}\right)
\Bigg\}\\
&=  \frac{1}{n^{k+1}}\left(\prod_{j=1}^{k+1} u_j\right)^{\frac{1}{n}-1} \Bigg\{ \sum_{j=1}^{n} \frac{n!}{(n-j)!} \cdot C\left(u_1^{1/n},\ldots,u_d^{1/n}\right)^{n-j} \Bigg.\\
&\quad \cdot \sum_{\substack{\mathcal{P}\in\mathcal{S}_{k+1,j} \\ \left\{k+1\right\}\in\mathcal{P}}} \prod_{M\in\mathcal{P}} \partial_M C\left(u_1^{1/n},\ldots,u_d^{1/n}\right) + \sum_{j=1}^{n} \frac{n!}{(n-j)!}\cdot C\left(u_1^{1/n},\ldots,u_d^{1/n}\right)^{n-j}\\
&\quad \big.\cdot \sum_{\substack{\mathcal{P}\in\mathcal{S}_{k+1,j} \\ \left\{k+1\right\}\not\in\mathcal{P}}} \prod_{M\in\mathcal{P}} \partial_M C\left(u_1^{1/n},\ldots,u_d^{1/n}\right)
\Bigg\}
\end{split}
\]
\[
\begin{split}
\quad&=\frac{1}{n^{k+1}}\left(\prod_{j=1}^{k+1} u_j\right)^{\frac{1}{n}-1} \Bigg\{ \sum_{j=1}^{(k+1)\wedge n} \frac{n!}{(n-j)!} \cdot C\left(u_1^{1/n},\ldots,u_d^{1/n}\right)^{n-j} \Bigg.\\
\quad&\quad\cdot \Bigg. \sum_{\mathcal{P}\in\mathcal{S}_{k+1,j}} \prod_{M\in\mathcal{P}} \partial_M C\left(u_1^{1/n},\ldots,u_d^{1/n}\right) \Bigg\},
\end{split}
\]
}where we applied Equation \ref{eq:claim2} in the second equality. In the third equality we performed an index shift in the first sum and used property (a) above. Since $n\leq k$ we have $n=(k+1)\wedge n$. This concludes the second case and hence the proof of Lemma \ref{lem:claim1}.
\end{proof}

Having proved the auxiliary lemma we can now easily prove the statement from Theorem \ref{thm:copuladensity}.
\begin{proof}[Proof of Theorem \ref{thm:copuladensity}]
Using Equation \ref{eq:maxcop} we obtain
\[
\begin{split}
c_{\Mb^{(n)}}(u_1,\ldots,u_d)&=\frac{\partial^d }{\partial u_1\cdots \partial u_d}C_{\Mb^{(n)}}(u_1,\ldots,u_d)=\frac{\partial^d }{\partial u_1\cdots \partial u_d}\left[C(u_1^{1/n},\ldots,u_d^{1/n})^n\right].
\end{split}
\]
Now Theorem \ref{thm:copuladensity} follows directly from Lemma \ref{lem:claim1} for $k=d$.
\end{proof}

\subsection*{Proof of Theorem \ref{thm:copuladeriv}}
\begin{proof}
Expression 4) follows directly by the definition of $c(u_1,u_2,u_3)$ (see Equation \ref{eq:pcc}). For expression 3(a) we can write 
\begin{equation*}
\begin{split}
	\partial_{23} C(u_1,&u_2,u_3) 
	\stackrel{Eq. (\ref{eq:pcc})}{=} c_{23}(u_2,u_3) \int_0^{u_1} c_{12}(v_1,u_2) c_{13;2}(C_{1|2}(v_1|u_2),C_{3|2}(u_3|u_2)) dv_1 \\
	&=  c_{23}(u_2,u_3)  \int_0^{u_1} \partial_{12} C_{12}(v_1,u_2) \partial_{13} C_{13;2}(C_{1|2}(v_1|u_2),C_{3|2}(u_3|u_2)) dv_1 \\
	&=  c_{23}(u_2,u_3) \int_0^{u_1} \frac{\partial}{\partial v_1}\bigg[ \underbrace{\partial_{2} C_{12}(v_1,u_2)}_{C_{1|2}(v_1|u_2)=w_1}\bigg] \partial_{13} C_{13;2}(C_{1|2}(v_1|u_2),C_{3|2}(u_3|u_2)) dv_1\\
	&=   c_{23}(u_2,u_3) \int_0^{u_1} \frac{\partial w_1}{\partial v_1} \frac{\partial}{\partial w_1}\partial_{3} C_{13;2}(w_1,C_{3|2}(u_3|u_2)) \bigg|_{\substack{w_1=C_{1|2}(v_1|u_2)}} dv_1 \\
	&=  c_{23}(u_2,u_3)  \partial_{3} C_{13;2}(C_{1|2}(u_1|u_2),C_{3|2}(u_3|u_2)).
	\end{split}
\end{equation*}
Expression 3(a) follows similarly. For expression 2(c) we have
\begin{equation*}
\begin{split}
\partial_{3} C(u_1,u_2,u_3) &= \int_0^{u_1}\int_0^{u_2} c(v_1,v_2,u_3) dv_1 dv_2 \\
	&=\int_0^{u_2} c_{23}(v_2,u_3) C_{1|23}(u_1|v_2,u_3) dv_2 \\
	&= \int_0^{u_2} c_{23}(v_2,u_3) \partial_{3} C_{13;2}(C_{1|2}(u_1|v_2),C_{3|2}(u_3|v_2)) dv_2.
	\end{split}
\end{equation*}
Further, $\partial_{3} C(u_1,u_2,u_3)$ can be written in another way which we use for the calculation of the copula $C(u_1,u_2,u_3)$. In particular
we have 
\begin{align}
\begin{split}\label{eq:d3C}
\partial_{3} C&(u_1,u_2,u_3) = \int_0^{u_1}\int_0^{u_2} c(v_1,v_2,u_3) dv_1 dv_2 \\
	&= \int_0^{u_2} \frac{\partial^2}{\partial v_2 \partial u_3} C_{23}(v_2,u_3) \frac{\partial}{\partial w_2} 
C_{13;2}(C_{1|2}(u_1|v_2),w_2) \bigg|_{w_2=C_{3|2}(u_3|v_2)} dv_2  \\
	&= \int_0^{u_2} \frac{\partial w_2}{\partial u_3}\frac{\partial}{\partial w_2} C_{13;2}(C_{1|2}(u_1|v_2),w_2) \bigg|_{w_2=C_{3|2}(u_3|v_2)} dv_2\\
	&= \int_0^{u_2} \frac{\partial}{\partial u_3} C_{13;2}(C_{1|2}(u_1|v_2),C_{3|2}(u_3|v_2)) dv_2  \\
	&= \frac{\partial}{\partial u_3} \left[ \int_0^{u_2} C_{13;2}(C_{1|2}(u_1|v_2),C_{3|2}(u_3|v_2)) dv_2 \right].
\end{split}
\end{align}
Similarly one can derive expression 2(a). For the copula in expression 1 we have
\begin{equation*}
\begin{split}
	C(u_1,u_2,u_3) &\stackrel{\eqref{eq:d3C}}{=} \int_0^{u_3} \frac{\partial}{\partial v_3} \left[ \int_0^{u_2} C_{13;2}(C_{1|2}(u_1|v_2),C_{3|2}(v_3|v_2)) dv_2 \right]  dv_3\\
	&= \int_0^{u_2} \left[ \int_0^{u_3} \frac{\partial}{\partial v_3}  C_{13;2}(C_{1|2}(u_1|v_2),C_{3|2}(v_3|v_2)) dv_3 \right] dv_2 \\
	&= \int_0^{u_2} C_{13;2}\left( C_{1|2}(u_1|v_2), C_{3|2}(u_3|v_2)\right) dv_2.
	\end{split}
\end{equation*}
Expression 2(b) follows by differentiating expression 1 with respect to $v_2$. It remains to show expression 3(b). For this we differentiate 2(c) with respect to $v_1$. Therefore we have
\begin{equation*}
\begin{split}
\partial_{13} C(u_1,&u_2,u_3) = \frac{\partial}{\partial u_1}
\left[ \int_0^{u_2} \partial_{3} C_{13;2}(C_{1|2}(u_1|v_2),C_{3|2}(u_3|v_2)) c_{23}(v_2,u_3) dv_2 \right] \\
& =  \int_0^{u_2} \partial_{13}
C_{13;2}(C_{1|2}(u_1|v_2),C_{3|2}(u_3|v_2))c_{12}(u_1,v_2)
 c_{23}(v_2,u_3) dv_2 \\
& = \int_0^{u_2} 
c_{13;2}
(C_{1|2}(u_1|v_2), C_{3|2}(u_3|v_2)) 
c_{12}(u_1,v_2)
 c_{23}(v_2,u_3) dv_2.
\end{split}
\end{equation*}
\end{proof}

\bibliographystyle{asa}
\bibliography{vine}

\end{document}